\DeclareMathOperator{\denom}{denom}
\newcommand{\Fp}{\mathbb{F}_{p}}
\newcommand{\lc}{\mathrm{lc}}
\newcommand{\LC}{\mathrm{LC}}
\newcommand{\lcs}{\mathfrak{lc}}
\newcommand{\LCs}{\mathfrak{lc}}
\newcommand{\tc}{\mathrm{tc}}
\newcommand{\TC}{\mathfrak{tc}}
\newcommand{\ord}{\mathop{\rm ord}}
\newtheorem{remark}{Remark}
  \providecommand\BibTeX{{%
    \normalfont B\kern-0.5em{\scshape i\kern-0.25em b}\kern-0.8em\TeX}}}
\begin{document}

%%
%% The "title" command has an optional parameter,
%% allowing the author to define a "short title" to be used in page headers.

% \title{Desingularization and p-Characteristic Polynomial of Linear Recurrence Operators}
%
% Decide which is best:
%
\title{Desingularization and p-Curvature of Recurrence Operators}

%%
%% The "author" command and its associated commands are used to define
%% the authors and their affiliations.
%% Of note is the shared affiliation of the first two authors, and the
%% "authornote" and "authornotemark" commands
%% used to denote shared contribution to the research.

\author{Yi Zhou}
\affiliation{%
  \institution{Florida State University}
  \city{Tallahassee}
  \country{USA}}
\email{yzhou@math.fsu.edu}

\author{Mark van Hoeij}
\affiliation{%
  \institution{Florida State University}
  \city{Tallahassee}
  \country{USA}}
\email{hoeij@math.fsu.edu}

%%
%% By default, the full list of authors will be used in the page
%% headers. Often, this list is too long, and will overlap
%% other information printed in the page headers. This command allows
%% the author to define a more concise list
%% of authors' names for this purpose.
\renewcommand{\shortauthors}{Zhou and van Hoeij}

%%
%% The abstract is a short summary of the work to be presented in the
%% article.
\begin{abstract}
Linear recurrence operators in characteristic $p$ are classified by their $p$-curvature. For a recurrence operator $L$, denote by $\chi(L)$ the characteristic polynomial of its $p$-curvature.
We can obtain information about the factorization of $L$ by factoring $\chi(L)$. The main theorem of this paper gives an unexpected relation between $\chi(L)$ and the true singularities of $L$.
An application is to speed up a fast algorithm for computing $\chi(L)$ by desingularizing $L$ first. Another contribution of this paper is faster desingularization.
\end{abstract}

%%
%% The code below is generated by the tool at http://dl.acm.org/ccs.cfm.
%% Please copy and paste the code instead of the example below.
%%
\begin{CCSXML}
<ccs2012>
<concept>
<concept_id>10010147.10010148.10010149.10010150</concept_id>
<concept_desc>Computing methodologies~Algebraic algorithms</concept_desc>
<concept_significance>500</concept_significance>
</concept>
</ccs2012>
\end{CCSXML}

\ccsdesc[500]{Computing methodologies~Algebraic algorithms}

%%
%% Keywords. The author(s) should pick words that accurately describe
%% the work being presented. Separate the keywords with commas.
\keywords{linear recurrence equations, singularities, $p$-curvature, algorithms}

%% A "teaser" image appears between the author and affiliation
%% information and the body of the document, and typically spans the
%% page.
%\begin{teaserfigure}
%  \includegraphics[width=\textwidth]{sampleteaser}
%  \caption{Seattle Mariners at Spring Training, 2010.}
%  \Description{Enjoying the baseball game from the third-base
%  seats. Ichiro Suzuki preparing to bat.}
%  \label{fig:teaser}
%\end{teaserfigure}

%%
%% This command processes the author and affiliation and title
%% information and builds the first part of the formatted document.
\maketitle

\section{Introduction}
Singularities of linear difference operators can be divided into two groups, true (i.e. non-removable) singularities,
and apparant (i.e. removable) singularities. Desingularization (detecting or removing apparant singularities)
can expedite various algorithms for difference or differential equations.
An early application \cite{HomURL} appeared in  \verb+DEtools[Homomorphisms]+
in Maple 10. Other algorithms that benefit from reducing the number of singularities
include finding closed form solutions
%  \cite{chalevyvanhoeij},  --> not explicitly mentioned there.  Instead I'll move the other examples here
and factoring, e.g. \verb+LREtools[RightFactors]+ in Maple 2021.

In characteristic $p$, difference operators can be classified by the so-called $p$-curvature.  Our main
result gives a relation between $\chi(L)$, the characteristic polynomial of the $p$-curvature of $L$, and the true singularities of $L$. We
prove that the denominator of $\chi(L)$ determines the true singularities, including their multiplicities, up to shift equivalence.
%equals the norm of ${\rm LC}(L)$
%(a polynomial that represents the true singularities of $L$).

The algorithm from \cite{Bostan:2014:FAC:2608628.2608650} computes $\chi(L)$, multiplied by a denominator bound,
by computing its $Z$-adic expansion.
One application of our theorem is that we can replace the denominator bound by the exact denominator.
This lowers the required $Z$-adic precision, which can speed up the computation, see \autoref{sec:pcurv_timing}.
% and as already mentioned, other algorithms benefit from desingularization as well. 

We want desingularization to take less time than the time it saves in applications. Then it is useful to compute a
{\em partial desingularization} (where the goal is to remove most apparant singularities, at a fraction of the cost of a full desingularization).
We give various algorithms for this in \autoref{sec:desing1}.

Both authors were supported by NSF grants 1618657 and 2007959.

\section{Preliminaries}
\subsection{Desingularization}
Let $F$ be a field. Let $P = F[x][y]$ and $D = F(x)[y]$. If $f \in P$, then $f$ is called {\em primitive} if the gcd of its coefficients in $F[x]$ is 1.
If $f \in D - \{0\}$, then there is $c \in F(x)-\{0\}$, unique up to a factor in $F$, for which $c^{-1} f \in P$ is primitive. The content of $f$,
denoted ${\rm Cont}(f)$, is this $c$,
while the {\em primitive part} of $f$ is ${\rm Prim}(f) = c^{-1} f \in P$.
A version of Gauss's lemma says ${\rm Cont}(f_1 f_2) = {\rm Cont}(f_1) {\rm Cont}(f_2)$.

Let $\tau$ be the shift-operator. If $r(x)$ is a rational function then the product $\tau \cdot r(x)$ equals $r(x+1) \cdot \tau$. This product turns
$\mathcal{P} := F[x][\tau]$ and $\mathcal{D} := F(x)[\tau]$ into non-commutative rings. The product corresponds to compositions of operators, where $L=\sum_{i=0}^n a_i(x)\tau^i\in\mathcal{D}$ operates on $y(x)$ as $L(y(x))=\sum_{i=0}^n a_i(x)y(x+i)$.

If $L \in \mathcal{D}-\{0\}$ we can define ${\rm Cont}(L) \in F(x)$ and ${\rm Prim}(L) \in \mathcal{P}$ in the same way as before.

\begin{definition} \label{Gaussian} An operator $L \in \mathcal{P}$ is called {\em Gaussian}
if  
$$\forall_{A \in \mathcal{D}} \ A L \in \mathcal{P} \Longrightarrow A \in \mathcal{P}.$$
\end{definition}

If $L$ is Gaussian then $L$ is primitive. In the commutative case, the two properties are equivalent by Gauss's lemma.
It is known that Gauss's lemma does not hold in
the non-commutative case, which is illustrated in Example~\ref{ex:desing0} below (see also \cite[p.~27]{Jaroschek}).

% The reason for this definition is as follows.
An element $L = \sum_{i=0}^n a_i(x) \tau^i \in \mathcal{P}$ corresponds to a recurrence relation $L(y(x))=0$, i.e.
\begin{equation}\label{eq:rec}
 a_n(x)y(x+n)+a_{n-1}(x)y(x+n-1)+\cdots +a_0(x)y(x)=0.
\end{equation}
So we can express $y(r)$ in terms of $y(r-1),\ldots,y(r-n)$ where $r = x+n$.
The expression is not defined when $r$ is a root of the denominator, which is $a_n(r-n)$. Hence we define:

\begin{definition}
Let $L = \sum_{i=0}^n a_i(x) \tau^i \in \mathcal{P}$ be primitive. If $a_n \neq 0$ then define ${\rm ord}(L) := n$ and let $\lcs(L)$ be $a_n(x-n)$ divided
by its leading coefficient (to make it monic).
%
% Move this line elsewhere or just delete?
% (This $\lcs(L)$ appears as leading coefficient if we rewrite $L$ in the non-standard form $\sum_{i=0}^n \tau^i a_i(x-i)$.)
%
The \emph{singularities} of $L$ are the monic irreducible factors of $\lcs(L)$ in $F[x]$ (or equivalently, their roots in $\overline{F}$) with their multiplicities.
% We write these roots (or factors) with their multiplicities.
\end{definition}

\begin{example}\label{ex:desing0}
Let
$$L=x^2(x^2+1)\tau-(x+1)(x^2+2x+2)\in\mathbb{Q}[x][\tau].$$
Substituting $x \mapsto x-{\rm ord}(L)$ in the leading coefficient we obtain
$\lcs(L) = (x-1)^2 ( (x-1)^2+1 )$.
With repetition indicating multiplicity,
the singularities are $x-1$, $x-1$, $(x-1)^2+1$, or equivalently $1$, \ $1$, \ $1 \pm \sqrt{-1}$.
Let
$$A={\frac {1}{ \left( x+1 \right)  \left( {x}
^{2}+2\,x+2 \right) }}(10\,\tau+11\,{x}^{2}+15\,x+14).$$
Then
$$AL=10 \left( x+1 \right) {\tau}^{2}+ \left( 11\,{x}^{3}-18\,{x}^{2}+35
\,x-50 \right) \tau-11\,{x}^{2}-15\,x-14.
$$
Now $\lcs(AL) = x+1-{\rm ord}(AL) = x-1$ (we divided by 10 to make it monic).
Compared to $L$,
        % , the list of singularities reduces from $x-1$, $x-1$, $(x-1)^2+1$ to just $x-1$, so this \emph{desingularization}
the singularity $(x-1)^2+1$ disappeared, as well as one of the two copies of $x-1$.
\end{example}

\begin{lemma}[{\cite[Theorem~4.1.7, Corollary~4.1.9]{Jaroschek}} ]\label{lemma:LC} % \todo{Search in the literature to see if this is proved somewhere else}
Let $L\in \mathcal{P}$. For $k=0,1,2,\ldots$ let
    \begin{equation}\label{eq:lcideal}
    \mathcal{I}_k= \{0\}\,\cup\,\{\lcs(AL) \ : \ A \in \mathcal{D}, AL \in \mathcal{P}, {\rm ord}(A) = k\}
\end{equation}
and let $\mathcal{I}_{\infty}$ be their union. Then
% \begin{equation}\label{eq:lcidealchain}
$\mathcal{I}_0\subseteq \mathcal{I}_1\subseteq \mathcal{I}_2 \subseteq \cdots \subseteq \mathcal{I}_{\infty}$
% \end{equation}
are ideals in $F[x]$.
\end{lemma}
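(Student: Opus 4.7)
The plan is to verify the three ideal axioms for each $\mathcal{I}_k$ (it contains $0$, is closed under addition, and is closed under multiplication by $F[x]$), to establish the inclusion $\mathcal{I}_k\subseteq\mathcal{I}_{k+1}$ by a direct construction, and to conclude that $\mathcal{I}_\infty$ is an ideal because it is the ascending union of ideals. Throughout, write $n=\ord(L)$, so $m:=k+n$ is the order of $AL$ when $\ord(A)=k$, and this $m$ governs the shift $x\mapsto x-m$ that appears in the definition of $\lcs$.

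For the chain inclusion, given $A$ with $\ord(A)=k$ and $AL\in\mathcal{P}$, I would take $A'=\tau A$; then $\ord(A')=k+1$, $A'L=\tau(AL)\in\mathcal{P}$, the leading coefficient of $A'L$ at $\tau^{m+1}$ equals that of $AL$ at $\tau^m$ with $x$ shifted by one, and applying $x\mapsto x-(m+1)$ returns $\lcs(AL)$. Hence $\lcs(A'L)=\lcs(AL)\in\mathcal{I}_{k+1}$.

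For closure under $F[x]$-multiplication, given $r(x)\in F[x]$ and $p=\lcs(AL)\in\mathcal{I}_k$, I would take $A'=r(x+m)\,A\in\mathcal{D}$. Then $\ord(A')=k$, $A'L=r(x+m)\,AL\in\mathcal{P}$, and its leading coefficient at $\tau^m$ is $r(x+m)$ times that of $AL$. After the shift $x\mapsto x-m$ this becomes $r(x)\,p$, so $r(x)\,p\in\mathcal{I}_k$. For closure under addition, given $p_i=\lcs(A_iL)$ with $\ord(A_i)=k$, I would use $A=A_1+A_2\in\mathcal{D}$. If the $\tau^k$-coefficients of $A_1,A_2$ do not cancel, then $\ord(A)=k$ and the leading coefficient of $AL$ at $\tau^m$ is the sum of those of $A_1L,A_2L$, so $\lcs(AL)=p_1+p_2$; if they cancel, the shifted leading coefficients also cancel, hence $p_1+p_2=0\in\mathcal{I}_k$. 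Finally $\mathcal{I}_\infty$ is an ideal as an ascending union of ideals.

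The main obstacle is bookkeeping around the monic normalization built into $\lcs$: strictly speaking one should interpret $\mathcal{I}_k$ as the $F[x]$-submodule of $F[x]$ generated by the monic polynomials $\lcs(AL)$, and then check that the constructions above produce a representative in the indexed set itself (rescaling the outer $A$ by a constant in $F^{*}$ and absorbing it into the choice of representative). Once that bookkeeping is settled, the ideal axioms hold verbatim.
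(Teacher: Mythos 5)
Your proof takes essentially the same route as the paper: the chain inclusion $\mathcal{I}_k\subseteq\mathcal{I}_{k+1}$ via replacing $A$ by $\tau A$ is exactly the paper's argument, and the paper simply asserts ``clearly'' and ``not difficult to show'' where you supply the explicit constructions $A\mapsto r(x+m)A$ for $F[x]$-multiplication and $A_1+A_2$ for addition. Those constructions are the natural ones and are correct once the normalization issue is sorted out.

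Your bookkeeping remark is well-placed and actually identifies a genuine imprecision in the paper. Because the paper defines $\lcs(\cdot)$ to be monic, the set $\mathcal{I}_k$ as literally written contains only monic polynomials together with $0$, so it cannot literally satisfy the ideal axioms (e.g.\ $r\cdot\lcs(AL)$ and $\lcs\bigl(r(x+m)A\,L\bigr)$ differ by the factor $\lc(r)$, and in your cancellation case $p_1+p_2=2p_1\neq 0$ when $\mathrm{char}\,F\neq 2$ because both $p_i$ are monic). The intended reading, consistent with everything else in the paper, is to drop the monic normalization when forming $\mathcal{I}_k$ (equivalently, allow arbitrary $F^*$-multiples of $\lcs(AL)$, or pass to the generated $F[x]$-ideal as you suggest); under that reading your constructions go through verbatim. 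This is the same fix the paper is tacitly using, so the two proofs coincide in substance.
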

\begin{proof}

To show $\mathcal{I}_k \subseteq \mathcal{I}_{k+1}$, take a non-zero $a \in \mathcal{I}_k$. So $a = \lcs(AL)$ for some $A$ of order $k$. Replacing $A$ by $\tau A$ shows that $a \in \mathcal{I}_{k+1}$.
Clearly $\mathcal{I}_k$ is closed under $F[x]$-multiplication, and it is not difficult to show that it is closed
under addition as well.
%
% to show that it is an ideal, it remains to show that $\mathcal{I}_k(L)$ is closed under non-zero additions.
%
% Suppose $a,b \in \mathcal{I}_k$ are not 0.
% Then $a = \lcs(A)$ and $b=\lcs(BL)$ for some $A,B \in \mathcal{D}$ of order $k$ with $AL, BL \in \mathcal{P}$.
% If $a+b=0$ then there is nothing to prove, so assume $a+b \neq 0$.  Then $a+b = \lcs((A+B)L)$ so $a+b \in \mathcal{I}_k$.
\end{proof}

\begin{definition}[Essential parts, removable parts, {\cite[Definitions 4.1.8 and 4.1.10]{Jaroschek}}]
\label{def:essentialpart}
With notations as in Lemma~\ref{lemma:LC},
let $\LCs_k(L)$ be the monic generator of $\mathcal{I}_k$ for $k=0,1,2,\ldots,\infty$. %, and $\LCs_\infty(L)=\LCs_\infty(L)$ the monic generator of $\mathcal{I}_{\infty}$.
        %  and $\lcs(L)$ a generator of $\mathcal{I}$. Let $\LC_k(L)=\tau^{\deg_\tau(L)}(\LCs_k(L))$ and $\LC(L)=\tau^{\deg_\tau(L)}(\lcs(L))$.
%(In \cite{Jaroschek} this is called the \emph{essential part} of the the leading coefficient.)
Call $\LCs_k(L)$ the \emph{essential part of the leading coefficient at order $k$}. 
Note that $\LCs_0(L)=\lcs(\mathrm{Prim}(L))$ and $\LCs_{l}(L)$ divides $\LCs_{k}(L)$ if $l \geq k$. Let $\mathfrak{rp}_k(L)=\frac{\lcs(L)}{\lcs_k(L)}\in F[x]$ and call it the \emph{removable part of the leading coefficient at order $k$}.
\end{definition}

Factors (or roots) of $\lcs(L)$ are divided into two (possibly overlapping)
sublists: Factors of
$\lcs_\infty(L)$ are the \emph{true singularities} of $L$. Factors of $\mathfrak{rp}_\infty(L)$ are the \emph{apparent singularities}.
In \autoref{ex:desing0} the true singularity is $x-1$ and the apparant singularities are $x-1$, $(x-1)^2-1$.

\begin{definition}
Let $L\in\mathcal{P}$ be primitive. We call $A \in \mathcal{D}$ a \emph{desingularizer} if $AL \in \mathcal{P}$.
Such $A$ is \emph{trivial} if $A \in \mathcal{P}$ (that implies $\lcs( L ) \ | \ \lcs( AL )$, so no singularities were removed).
A desingularizer $A$ is \emph{optimal at order $k$} if $\lcs(AL)=\LCs_k(L)$ and $\ord(A)\leq k$; when $k=\infty$ say $A$ is \emph{optimal}
since it removes all apparant singularities while introducing no new singularities.
\end{definition}
An optimal desingularizer at order $k$ exists because $\mathcal{I}_{k}$ is a principal ideal; $F[x]$ is a PID.

\subsection{LCLM method for desingularization}\label{section:LCLMmethod}
Desingularization of recurrence operators has been studied in \cite{abramov1999}, \cite{vanhoeij2006}, \cite{chen2013}, \cite{Jaroschek}, \cite{CHEN2016617} and \cite{Zhang2016}. 
Papers \cite{abramov1999}, \cite{vanhoeij2006}, \cite{chen2013}, \cite{Jaroschek} aim for full desingularization; \cite{Zhang2016} focuses on desingularization over $R[x]$ where $R$ is not a field. %As is mentioned in \autoref{rem:balance}, full desingularization does not fit our purpose.
Here we describe the so-called \emph{LCLM method}, % which was used in \cite{HomURL} and later
published in \cite{CHEN2016617}. LCLM stands for the least common left multiple and GCRD the greatest common right divisor. The main result of \cite{CHEN2016617} is restated
below, where we focus on the recurrence case while the original version also applies to other types of Ore operators.
\begin{theorem}[Reformulation of Theorem 6 in \cite{CHEN2016617}]\label{thm:lclmdesing}
Suppose $L\in \mathcal{P}$. Introduce new constants $c_0,c_1,\ldots, c_k$ that are algebraically independent over $F$. Denote $A=c_0+c_1\tau+\cdots +c_k\tau^k$ and $L'=\mathrm{Prim}(\mathrm{LCLM}(L,A))\in F(c_0,c_1,\ldots,c_k)[x][\tau]$. Then $\lcs(L')=\LCs_k(L)f$ where $f\in F(c_0,c_1,\ldots,c_k)[x]$ has no non-trivial factor in $F[x]$.
\end{theorem}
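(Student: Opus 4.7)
The plan uses the standard cofactor decomposition $\mathrm{LCLM}(L,A) = BL = CA$ in $F(c)(x)[\tau]$, where $\ord(B) = k$ and $\ord(C) = n := \ord(L)$; the full order is attained because the generic $c_i$ force the GCRD of $L$ and $A$ to be trivial. Set $B' := B/\mathrm{Cont}(BL)$ so that $L' = B'L$; then $B'$ is a desingularizer of $L$ of order $k$ over the enlarged base field $F(c)$.

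First I would show $\LCs_k(L) \mid \lcs(L')$ in $F(c)[x]$. Let $\mathcal{I}_k^{F(c)}$ denote the analogue of the ideal in Lemma~\ref{lemma:LC} formed with desingularizers over $F(c)(x)$. Since $L' \in F(c)[x][\tau]$, its leading coefficient lies in $\mathcal{I}_k^{F(c)}$, and the claim reduces to showing that the monic generator of $\mathcal{I}_k^{F(c)}$ is $\LCs_k(L)$. This generator is Galois-invariant under $\mathrm{Aut}(F(c)/F)$ --- since $F$-automorphisms permute the defining set of desingularizers --- so it lies in $F[x]$ (the fixed field of a purely transcendental extension is the ground field). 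Conversely, clearing $c$-denominators and specializing $c$ to generic values in $F$ produces an $F$-desingularizer with the same leading coefficient, placing the generator in $\mathcal{I}_k$, which forces equality with $\LCs_k(L)$.

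For the harder claim that $f := \lcs(L')/\LCs_k(L)$ has no nontrivial $F[x]$ factor, I would argue by contradiction. Suppose $g \in F[x]$ is monic irreducible with $g \mid f$. Pick an optimal order-$k$ desingularizer $B_0 \in F(x)[\tau]$, so $\lcs(B_0 L) = \LCs_k(L)$; this in fact forces $B_0 L$ to be primitive, since any nontrivial content $u$ would yield $\LCs_k(L)/\tau^{-(n+k)}(u) \in \mathcal{I}_k$, contradicting the minimality of $\LCs_k(L)$. Consider the specialization $\sigma : c_i \mapsto \beta_i$ with $B_0 = \sum \beta_i \tau^i$, so $\sigma(A) = B_0$. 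Optimality persists under $B_0 \leftarrow B_0 + H$ for any order-$<k$ desingularizer $H$, giving an affine family of choices; I would then pick $B_0$ in the dense open subset where $\sigma$ is regular on $B'$ and commutes with both $\mathrm{LCLM}$ and $\mathrm{Prim}$. This yields $\sigma(L') = v \cdot B_0 L$ for some $v \in F(x)^\times$. Matching leading coefficients gives $\sigma(f) = \tau^{-(n+k)}(v)$, and a content argument exploiting primitivity of $L'$ over $F(c)[x]$ and of $B_0 L$ over $F[x]$ forces $v \in F^\times$; then $g \mid \sigma(f) \in F^\times$ contradicts $\deg_x g \geq 1$.

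The main obstacle is the last step --- upgrading $v \in F(x)^\times$ to $v \in F^\times$. Because $\sigma : F(c) \to F(x)$ is ``mixed'' (it sends transcendental constants to rational functions of $x$), $v$ could in principle carry spurious $x$-dependence that destroys the contradiction. Ruling this out is where the primitivity of $L'$ in $F(c)[x][\tau]$ must be leveraged carefully against the primitivity of $B_0 L$ in $F[x][\tau]$, and is the technical heart of the argument.
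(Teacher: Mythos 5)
The paper does not actually prove this statement; it cites Theorem~6 of Chen--Jaroschek--Kauers--Singer and records two observations: the argument there already covers the case $e=m=0$ (yielding the ``no non-trivial factor in $F[x]$'' formulation), and it goes through unchanged in positive characteristic. So there is no paper proof to compare against; you are supplying an original argument.

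Unfortunately, your argument has a genuine gap at its heart, and it is not the one you flag. The specialization $\sigma\colon c_i\mapsto\beta_i(x)$ sends the $\tau$-constants $c_i$ to rational functions $\beta_i\in F(x)$ that are \emph{not} fixed by $\tau$. Hence $\sigma$ is not a morphism of difference rings, and it cannot commute with $\mathrm{LCLM}$ (or with $\mathrm{Prim}$, which is computed inside the difference ring). Concretely, $\mathrm{LCLM}(L,A)$ in $F(c)(x)[\tau]$ is computed using the relation $\tau c_i = c_i\tau$, whereas $\mathrm{LCLM}(L,B_0)$ is computed using $\tau\beta_i(x)=\beta_i(x{+}1)\tau$; there is no dense open locus of choices of $B_0$ on which these two computations agree after $\sigma$, because the obstruction is structural, not a degenerate-coefficient phenomenon. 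Your claim that one can ``pick $B_0$ in the dense open subset where $\sigma$\ldots commutes with both $\mathrm{LCLM}$ and $\mathrm{Prim}$'' is therefore unjustified, and without it the identity $\sigma(L')=v\cdot B_0L$ and everything downstream collapses. The difficulty you do flag --- upgrading $v\in F(x)^\times$ to $v\in F^\times$ --- is downstream of this and moot until the specialization step is repaired.

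Two secondary issues. First, the Galois-descent step asserts that the fixed field of $\mathrm{Aut}(F(c)/F)$ is $F$; this is false for finite $F$ (e.g.\ $\mathrm{PGL}_2(\mathbb{F}_q)$ has non-trivial invariants in $\mathbb{F}_q(t)$), and finite $F$ is precisely the case the paper applies the theorem to. A cleaner argument is available: the monic generator of $\mathcal{I}_k^{F(c)}$ divides $\lcs(L)\in F[x]$, and any monic divisor of $\lcs(L)$ in $F(c)[x]$ has coefficients in $F(c)\cap\overline F=F$ because $F(c)/F$ is purely transcendental. Second, you should be careful that the equality $\LCs_k^{F(c)}(L)=\LCs_k(L)$ is derived in the paper (Remark~\ref{rem:lcfieldext}) as a \emph{consequence} of this very theorem, so any use of it must come with an independent justification to avoid circularity; your specialization-to-generic-$F$-values argument for the reverse inclusion again runs into trouble when $F$ is finite.
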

The original form of Theorem 6 in \cite{CHEN2016617}:\newline
\emph{
Let $q$ be an irreducible polynomial which appears with multiplicity $e$ in $\lcs(L)$ and let $m\leq e$ be maximal such that $q^m\mid \frac{\lcs(L)}{\LCs_k(L)}$ for $k\in\mathbb{N}$. Let $A=c_0+c_1\tau+\cdots +c_k\tau^k$ in $F(c_0,\ldots,c_k)[x][\tau]$, where $c_0,\ldots, c_k$ are new constants that are algebraically independent over $F$. Denote $L'=\mathrm{Prim}(\mathrm{LCLM}(L,A))$. Then the multiplicity of $q$ in $\lcs(L')$ is $e-m$.
}

The proof in \cite{CHEN2016617} also holds when $e=m=0$, which results in \autoref{thm:lclmdesing}. It was stated for the case $\mathrm{char}(F)=0$, but the proof is valid for positive characteristic as well. 
\begin{remark}\label{rem:lcfieldext}
\autoref{thm:lclmdesing} implies $\LCs_k(L)$ stays the same if $L$ is viewed as an operator in $E(x)[\tau]$ where $E$ is a field extension of $F$, as the field extension does not affect $L'=\mathrm{Prim}(\mathrm{LCLM}(L,A))$.
\end{remark}

\autoref{thm:lclmdesing} implies the following desingularization algorithm.

%Theorem 6 of \cite{CHEN2016617} states that, assuming $\mathrm{char}(F)=0$, the equation $\texttt{LCLM\_Method}(L,k)=\LC_k(L)$ holds with a high probability.
\begin{algorithm}
\SetKwInOut{Input}{Input}\SetKwInOut{Output}{Output}
\caption{\texttt{LCLM\_Method}}\label{alg:LCLM}
\Input{a primitive operator $L=\sum_{i=0}^n a_i \tau^i\in F[x][\tau]$ and positive integer $k$}
%\Output{a polynomial in $F[x]$ which is divisible by $\LC_k(L)$, and with high probability equals $\LC_k(L)$}
\Output{$\LCs_k(L)$}
\BlankLine

$A\leftarrow \sum_{i=0}^{k} c_i\tau^i$, where $c_0,c_1,\ldots, c_{k}$ are new constants that are algebraically independent over $F$;

$L'\leftarrow \mathrm{Prim}(\mathrm{LCLM}(A,L))$;

\Return{$\gcd(\lcs(L),\lcs(L')$};
\end{algorithm}

The discussion following the main theorem in \cite{CHEN2016617} states that in characteristic 0, instead of new constants, we can let $c_0,c_1,\ldots, c_k$ be random elements in $F$. In this case the algorithm is Monte-Carlo, meaning it returns the desired result with a high probability. The Monte-Carlo version is much faster since it avoids computations in a transcendental extension of $F$.
It was implemented \cite{HomURL} with $k=1$ by the second author in 2004.

%When $c_0,c_1,\ldots, c_{k}$ are random elements in $F$ but with a low probability does not return the desired result $\LC_k(L)$.
%If $c_0,c_1,\ldots, c_{k}$ are new constants the output is guaranteed to be $\LC_k(L)$ but the algorithm is slow.

We refer to the algorithm where $k=1$ as \emph{order-1 LCLM method}. The Monte-Carlo version of order-1 LCLM method is useful in practice since it strikes a good balance between benefit and cost.
The LCLM computation is much faster for $k=1$ than for larger $k$, 
and $\lcs_1(L)$ is often very close to $\LCs_\infty(L)$. We will further speed up the algorithm in Section~\ref{sec:desing1}.

\iffalse
Apart from its practical application, \autoref{thm:lclmdesing} is also of theoretical significance. We will refer to the following remarks later. 
\begin{remark}
 In \cite{CHEN2016617}, it is assumed that $\mathrm{char}(F)=0$, but their proof for \autoref{thm:lclmdesing} is in fact valid for positive characteristic as well.
\end{remark}
\begin{remark}\label{rem:lclmdesing1}
Use the same notations as \autoref{thm:lclmdesing}.
 We emphasize that $e$ can be 0. In other words, if $f$ does not appear in $\lcs(L)$ at all, then does not in $\lcs(L')$ as well. This implies $\lcs(L')=\LCs_k(L)g$, where $g\in F(c_0,\ldots,c_k)[x]$ has no non-trivial factor in $F[x]$.
\end{remark}
\begin{remark}\label{rem:lclmdesing}
 \autoref{rem:lclmdesing1} implies that $\LC_k(L)$ stays the same if $F$ is replaced with a field extension $E\supseteq F$, because the field extension does not affect $L'=\mathrm{Prim}(\mathrm{LCLM}(L,A)).$ 
\end{remark}
\fi
%Finally it is worth noting that the LCLM method also applies to trailing coefficient.

%\section{Recurrence Operators in Positive Characteristic}
%Linear recurrence operators in positive characteristics have been studied in \cite{van2003galois}\todo{chapter} and \cite{Bostan:2014:FAC:2608628.2608650}. Definitions and most results in this section can be found in the latter. 

\subsection{p-Characteristic polynomial}
From here until Section~\ref{sec:desing1}, $F$ will be a field of characteristic $p$, where $p$ is a prime number.

A general theory of linear difference equations in positive characteristic is developed in \cite[Chapter 5]{van2003galois}. In \cite{Bostan:2014:FAC:2608628.2608650}, $p$-characteristic polynomials of recurrence operators (and differential operators) over $\Fp[x]$ are studied and an algorithm for computing them is given. An algorithm for computing $p$-characteristic polynomials of operators in $\mathbb{Z}[x][\tau]$ for a number of $p$ is presented in \cite{pages}, based on the algorithm from \cite{Bostan:2014:FAC:2608628.2608650}. We will give more information about these algorithms
% in \cite{Bostan:2014:FAC:2608628.2608650} and \cite{pages}
in \autoref{sec:BCS}

 Let $Z=x^p-x=x(x+1)\cdots (x+p-1)$. Clearly $Z$ is fixed by $\tau$ and hence elements of $F(Z)$ are $\tau$-constant. In fact, $F(Z)$ \emph{is} the field of $\tau$-constants; to see this, notice that $F(x)$ is a degree $p$ field extension of $F(Z)$ and hence there is no proper intermediate field. 
  
Let $\mathcal{N}:F(x)\rightarrow F(Z)$ denote the norm map of the field extension $F(x)/F(Z)$. It is given by the formula 
$$\mathcal{N}: f(x)\mapsto f(x)f(x+1)\cdots f(x+p-1).$$

Denote $T=\tau^p$. The center of $\mathcal{D}$ is $F(Z)[T]$. Since $F(x)[T]\subseteq \mathcal{D}$, any $\mathcal{D}$-module is naturally an $F(x)[T]$-module, that is, an $F(x)$-vector space equipped with an $F(x)$-linear map.  % induced by $T$.  is mentioned in the next line:

{
\begin{definition}
For a $\mathcal{D}$-module $M$, call the $F(x)$-linear map induced by $T$ the \emph{$p$-curvature} of $M$.% and its characteristic polynomial the \emph{$p$-characteristic polynomial}. %Denote by $\chi(M)\in \Fp(x)[T]$ the $p$-characteristic polynomial.

For an operator $L\in \mathcal{D}$, define its $p$-curvature to be that of $\mathcal{D}/\mathcal{D}L$. Denote $\chi(L)\in F(x)[T]$ its characteristic polynomial with $T$ as its variable and call it the \emph{$p$-characteristic polynomial} of $L$.

{A characteristic polynomial is monic by definition so the leading coefficient of $L$ is lost in $\chi(L)$. To reinsert it,} denote $\tilde{\chi}(L)=\mathcal{N}(\lcs(L))\chi(L)$.
%Denote $\tilde{\chi}(L)=\mathcal{N}(\lc(L))\chi(L)$. % for $L\in D_p$. 
It is called the \emph{reduced norm}
%\footnote{(maybe thesis only) In fact, $\Fp(x)[\tau]$ is an Azumaya algebra, on which the so-called reduced norm is defined. It is proved in \cite{Bostan:2014:FAC:2608628.2608650} that $\tilde{\chi}$ coincides with the reduced norm map on $\Fp(x)[\tau]$ viewed as an Azumaya algebra.} 
of $L$ in \cite{Bostan:2014:FAC:2608628.2608650}.
\end{definition}
}

%We note that $F(x)[T]$ and $F(x)[\tau^p]$ are isomorphic to each other. 
%For our purpose it is convenient to identify $T$ with $\tau^p$, which means $\chi(L)\in F(x)[\tau^p]$ is also an operator. By doing so we can apply Cayley-Hamilton Theorem (see ... in lemma ...). 

%In the following lemma we present properties of $p$-characteristic polynomials that will be used.
\begin{lemma}\label{lemma:pcurvprop}
Properties of $p$-characteristic polynomials.
\begin{enumerate}
%    \item[1] If $L_1,L_2\in \mathcal{D}$ are gauge equivalent\todo{cite} then $\chi(L_1)=\chi(L_2)$.
    \item[(i)] For $L\in \mathcal{D}$, $\chi(L)\in F(Z)[T]$. 
    \item[(ii)] For $L\in \mathcal{D}$, $\chi(L)\in \mathcal{D}L$.
    \item[(iii)] For $L_1,L_2\in\mathcal{D}$, $\chi(L_1L_2)=\chi(L_1)\chi(L_2)$ and $\tilde{\chi}(L_1L_2)=\tilde{\chi}(L_1)\tilde{\chi}(L_2)$.
    \item[(iv)] For $L_1,L_2\in\mathcal{D}$, if $\,\mathrm{GCRD}(L_1,L_2)=1$, then $$\chi(\mathrm{LCLM}(L_1,L_2))=\chi(L_1)\chi(L_2).$$
    \item[(v)] For $L\in \mathcal{P}$, $\tilde{\chi}(L)\in F[Z][T]$ and $\deg_Z(\tilde{\chi}(L))\leq \deg_x(L)$.
    \item[(vi)] If $L\in F(Z)[T]$ then $\tilde{\chi}(L)=L^p$. 
\end{enumerate}
\end{lemma}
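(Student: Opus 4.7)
The six parts split naturally into module-theoretic/computational ones ((i)--(iv), (vi)) and an integrality claim ((v)); I expect (v) to be the main obstacle.

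For (i), the key observation is that $\tau T=T\tau$ in $\mathcal{D}$, so left multiplication by $\tau$ on $\mathcal{D}/\mathcal{D}L$ is additive, $F(x)$-semilinear for the automorphism $\tau:x\mapsto x+1$, and commutes with the $F(x)$-linear endomorphism induced by $T$. Consequently, if $M_T$ is the matrix of $T$ in an $F(x)$-basis $\{e_i\}$, the matrix in $\{\tau(e_i)\}$ is the entrywise shift $\tau(M_T)$; since the characteristic polynomial is basis-independent, $\det(TI-M_T)=\tau(\det(TI-M_T))$, so it lies in $F(x)^\tau[T]=F(Z)[T]$. Part (ii) is Cayley--Hamilton applied to the cyclic generator $[1]\in\mathcal{D}/\mathcal{D}L$.

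For (iii), I would use the short exact sequence of $F(x)[T]$-modules
\[ 0\longrightarrow \mathcal{D}L_1/\mathcal{D}L_1L_2\longrightarrow \mathcal{D}/\mathcal{D}L_1L_2\longrightarrow \mathcal{D}/\mathcal{D}L_1\longrightarrow 0 \]
together with the $T$-equivariant isomorphism $\mathcal{D}/\mathcal{D}L_2\to\mathcal{D}L_1/\mathcal{D}L_1L_2$, $B\mapsto BL_1$, whose kernel is $\mathcal{D}L_2$ because $\mathcal{D}$ is a domain. Multiplicativity of the characteristic polynomial over short exact sequences then gives $\chi(L_1L_2)=\chi(L_1)\chi(L_2)$, and the $\tilde\chi$ version follows from the direct identity $\lcs(L_1L_2)=\tau^{-\mathrm{ord}(L_2)}(\lcs(L_1))\cdot\lcs(L_2)$ combined with the multiplicativity and $\tau$-invariance of $\mathcal{N}$. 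Part (iv) is the direct-sum analogue: $\mathrm{GCRD}(L_1,L_2)=1$ yields a $T$-equivariant splitting $\mathcal{D}/\mathcal{D}\,\mathrm{LCLM}(L_1,L_2)\cong\mathcal{D}/\mathcal{D}L_1\oplus\mathcal{D}/\mathcal{D}L_2$, producing a block-diagonal $T$ and the claimed product. For (vi), using that $L\in F(Z)[T]$ is central (so $L/\lcs(L)\in\mathcal{D}L$), I would verify that $\{\tau^jT^k:0\le j<p,\;0\le k<\deg_TL\}$ is an $F(x)$-basis of $\mathcal{D}/\mathcal{D}L$ and that regrouping by $\tau$-shift makes $T$ block-diagonal with $p$ identical blocks equal to the companion matrix of $L/\lcs(L)$; hence $\chi(L)=(L/\lcs(L))^p$, and since $\lcs(L)\in F(Z)$ is $\tau$-fixed we have $\mathcal{N}(\lcs(L))=\lcs(L)^p$, so $\tilde\chi(L)=L^p$.

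The hard part is (v). The naive companion-matrix approach---take $M_\tau$ the shift companion and $M_T=M_\tau\cdot\tau(M_\tau)\cdots\tau^{p-1}(M_\tau)$---bounds entry-denominators of $M_T$ by $\mathcal{N}(\lcs(L))$ and so bounds the denominator of $\chi(L)$ only by $\mathcal{N}(\lcs(L))^n$, which is weaker than (v) by a factor of $\mathcal{N}(\lcs(L))^{n-1}$. To recover the sharp bound I would pass to the central-simple-algebra picture: $\mathcal{D}\otimes_{F(Z)[T]}F(Z)(T)$ is a CSA of degree $p$ over $F(Z)(T)$ with $\mathcal{P}$ a natural integral model, and $\tilde\chi(L)$ is the reduced norm of $L$; integrality of reduced norms on integral elements then forces $\tilde\chi(L)\in F[Z][T]$. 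The degree estimate $\deg_Z\tilde\chi(L)\le\deg_xL$ follows by tracking $x$-degrees through the same companion-matrix calculation once a single copy of $\mathcal{N}(\lcs(L))$ has been cleared.
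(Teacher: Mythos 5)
The paper itself gives no proof of parts (i), (ii), (iii), (v), (vi) --- it cites \cite{Bostan:2014:FAC:2608628.2608650} for those --- and proves only part (iv), exactly as you do: coprimality of $L_1,L_2$ yields $\mathcal{D}/\mathcal{D}\,\mathrm{LCLM}(L_1,L_2)\cong\mathcal{D}/\mathcal{D}L_1\oplus\mathcal{D}/\mathcal{D}L_2$ as $F(x)[T]$-modules, and characteristic polynomials multiply over direct sums. Your attempt to supply independent arguments for the remaining items is thus more self-contained than the paper's proof. Parts (i), (ii), (vi) are essentially correct, and for (v) you rightly observe that the naive companion-matrix bound is an $n$-th power too weak and sketch the correct repair via the reduced-norm/central-simple-algebra structure of $\mathcal{D}$ over $F(Z)[T]$, though that is left at the level of an outline.

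The genuine error is in (iii). The inclusion $\mathcal{D}L_1L_2\subseteq\mathcal{D}L_1$ that your short exact sequence presupposes is false in the noncommutative ring $\mathcal{D}$: for instance, with $L_1=\tau-x$ and $L_2=\tau$, right division gives $L_1L_2=(\tau+1)(\tau-x)+x$ with nonzero remainder $x$, so $L_1L_2\notin\mathcal{D}L_1$. Likewise your map $B\mapsto BL_1$ is not well-defined from $\mathcal{D}/\mathcal{D}L_2$: if $B=CL_2$ then $BL_1=CL_2L_1$, which is generally not in $\mathcal{D}L_1L_2$. The correct filtration uses the \emph{right} factor of $L_1L_2$: the inclusion $\mathcal{D}L_1L_2\subseteq\mathcal{D}L_2$ always holds, giving
\[
0\longrightarrow\mathcal{D}L_2/\mathcal{D}L_1L_2\longrightarrow\mathcal{D}/\mathcal{D}L_1L_2\longrightarrow\mathcal{D}/\mathcal{D}L_2\longrightarrow 0,
\]
and $B\mapsto BL_2$ gives a $T$-equivariant isomorphism $\mathcal{D}/\mathcal{D}L_1\to\mathcal{D}L_2/\mathcal{D}L_1L_2$ (well-defined because $CL_1\mapsto CL_1L_2$, injective because $\mathcal{D}$ is a domain). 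With the corrected sequence, multiplicativity of characteristic polynomials over short exact sequences yields $\chi(L_1L_2)=\chi(L_1)\chi(L_2)$, and your argument for the $\tilde\chi$ version via $\lcs(L_1L_2)=\tau^{-\mathrm{ord}(L_2)}(\lcs(L_1))\lcs(L_2)$ together with the $\tau$-invariance of $\mathcal{N}$ then goes through unchanged.
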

\begin{proof}
All except item (iv) are proved in \cite[Section 3]{Bostan:2014:FAC:2608628.2608650} for the case $F=\Fp$ and the proofs are valid for a general field $F$ with positive characteristic. We now prove (iv). Denote $L=\mathrm{LCLM}(L_1,L_2)$. If $\mathrm{GCRD}(L_1,L_2)=1$ then $\mathcal{D}/\mathcal{D}L\cong \mathcal{D}/\mathcal{D}L_1\oplus \mathcal{D}/\mathcal{D}L_2$ as $\mathcal{D}$-modules (and hence as $F(x)[T]$-modules). Now (iv) follows from the fact that characteristic polynomials are multiplicative on direct sums.
\end{proof}
{\color{blue}\iffalse
\begin{lemma}\label{lemma:pcurvGE}
If $L_1,L_2\in D_p$ are gauge equivalent\todo{cite} then $\chi(L_1)=\chi(L_2)$.
\end{lemma}
\begin{proof}
Directly follows from the definition.
\end{proof}

\begin{lemma}[Lemma ... in \cite{Bostan:2014:FAC:2608628.2608650}]
For $L\in D_p$, $\chi(L)\in \Fp(Z)[T]$.
\end{lemma}

\begin{lemma}\label{lemma:pcurvisamultiple}
For $L\in D_p$, $L$ is a right-hand factor of $\chi(L)\in \Fp(Z)[T]$.
\end{lemma}
\begin{proof}
By Cayley-Hamilton Theorem, $\chi(L)$ acts on $D_p/D_pL$ as the zero map. Specifically, it annihilates $1\in D_p/D_pL$. The collection of annihilators of 1 is $D_pL$.
\end{proof}

\begin{lemma}[Lemma ... in \cite{Bostan:2014:FAC:2608628.2608650}]\label{lemma:normdegbound}
For $L\in \Fp[x][\tau]$, $\tilde{\chi}(L)\in \Fp[Z][T]$ and $\deg_Z(\tilde{\chi}(L))\leq \deg_x(L)$.
\end{lemma}

\begin{lemma}[Lemma ... in \cite{Bostan:2014:FAC:2608628.2608650}]\label{lemma:normoncenter}
If $L\in \Fp(Z)[T]$, then $\tilde{\chi}(L)=L^p$.  
\end{lemma}

\begin{lemma}\label{lemma:pcurv111}
Suppose $L_1,L_2\in D_p$. If $L=L_1L_2$ or $L=\mathrm{LCLM}(L_1,L_2)$, then
$$\chi(L)=\chi(L_1)\chi(L_2).$$

%The following statements are true.
%\begin{itemize}
%    %\item[1] For $L\in D_p$, $\chi(L)\in \Fp(Z)[T]$. This is in fact a corollary of %(2).
%    \item[1] For $L\in D_p$, $\chi(L)\in \Fp(Z)[T]$. When $L\in\Fp[x][\tau]$, %$\tilde{\chi}(L)\in\Fp[Z][T]$ and $\deg_Z(\tilde{\chi}(L))\leq \deg_x(L)$.
%    \item[2] For $L_1,L_2\in D_p$,
%    $$\chi(L_1L_2)=\chi(L_1)\chi(L_2),\quad %\tilde{\chi}(L_1L_2)=\tilde{\chi}(L_1)\tilde{\chi}(L_2).$$
%\end{itemize}
\end{lemma}
\begin{proof} 

In either case, we have the short exact sequence of $D_p$-modules (hence $\Fp(x)[\tau^p]$-modules)
$$1\rightarrow D_p/D_pL_2\rightarrow D_p/D_pL\rightarrow D_p/D_pL_1 \rightarrow 1.$$
The desired result follows from the fact that characterstic polynomial is multiplicative on a short exact sequence.
\end{proof}
\fi
}

\hyperref[lemma:pcurvprop]{Lemma~\ref*{lemma:pcurvprop}(iii)} implies that an operator factors only when its $p$-characteristic polynomial factors (as a polynomial in $F(Z)[T]$). In fact, the $p$-characteristic polynomial tells us even more. See \cite{cluzeaufactorization} and \cite{VANDERPUT1996367} for discussions on this topic in the differential case. The $p$-characteristic polynomial is also useful for testing or proving irreducibility of operators in $\mathbb{Q}(x)[\tau]$ by reduction modulo $p$.

\subsection{BCS algorithm and Pag\`{e}s' algorithm}\label{sec:BCS}
Bostan, Caruso and Schost (2015) present an algorithm for computing the $p$-characteristic polynomial of an operator in $\Fp[x][\tau]$, called \texttt{Xi\_theta\_d} in \cite{Bostan:2014:FAC:2608628.2608650}. We refer to it as the BCS algorithm. %We also call the algorithm BCS algorithm, where BCS refers to three authors of \cite{Bostan:2014:FAC:2608628.2608650}. 
Their implementation in Magma is available at \url{https://github.com/schost}. 

The BCS algorithm takes a prime $p$ and a difference operator $L\in \Fp[x][\tau]$ as its input and computes $\tilde{\chi}(L)\in\Fp[Z][T]$ (making $\tilde{\chi}(L)$ monic gives $\chi(L)$). The algorithm computes $\tilde{\chi}(L)$
in $\Fp[[Z]][T]$ to precision $O(Z^{\deg_x(L)+1})$
which suffices by \hyperref[lemma:pcurvprop]{Lemma~\ref*{lemma:pcurvprop}(v)}.

For $L \in \mathcal{P}$, the first part of \hyperref[lemma:pcurvprop]{Lemma~\ref{lemma:pcurvprop}(v)} implies
that $\mathcal{N}(\lcs(L))$ is a denominator bound for $\chi(L)$. The BCS algorithm uses this bound to ensure that what it computes in $\Fp[[Z]][T]$ is in $\Fp[Z][T]$, not just in $\Fp(Z)[T]$.
We will show that (partial) desingularization leads to sharper denominator bounds. That reduces the required $Z$-adic precision, speeding up the computation. In fact, our main result \autoref{lemma:denom_pcurv}
says that full desingularization gives the exact denominator.
%, shows that the true singularities of $L$ determine the exact denominator. 

	For an operator $L\in \mathbb{Q}(x)[\tau]$, denote by $\chi_p(L)$ the $p$-characteristic polynomial of its reduction modulo $p$.
Pag\`{e}s (2021) gives an algorithm for computing $\chi_p(L)$ for a number of primes at the same time, if $L\in\mathbb{Z}[x][\tau]$ has a leading coefficient in $\mathbb{Z}$ (\cite[Algorithm 3]{pages}). The algorithm is based on the BCS algorithm.

\section{Main Theorem and Corollaries}
Let $\denom(\cdot)$ be the monic denominator of a rational function, or of a polynomial with rational function coefficients. We will use this notation in the cases $F[x]\subset F(x)$ and $F[Z][T]\subset F(Z)[T]$.
%Let $\denom(f)$ be the monic denominator of $f$. We use this notation in the following cases: $f\in F[Z][T]\subseteq F(Z)[T]$; $f\in F[x]\subseteq F(x)$.
\begin{theorem}\label{lemma:denom_pcurv}
For $L\in F[x][\tau]$, %the denominator of
%$\mathrm{\chi}(L)$ equals $\mathcal{N}(\LCs_\infty(L))$
%up to a scalar in $F^*$.
$\denom(\chi(L))=\mathcal{N}(\lcs_\infty(L))$.
\end{theorem}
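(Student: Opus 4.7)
The conclusion splits into two divisibilities. Write $D := \denom(\chi(L)) \in F[Z]$.

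\emph{Upper bound $D \mid \mathcal{N}(\lcs_\infty(L))$.} Let $A \in \mathcal{D}$ be an optimal desingularizer, so that $M := AL \in \mathcal{P}$ with $\lcs(M) = \lcs_\infty(L)$. By \autoref{lemma:pcurvprop}(v), $\tilde{\chi}(M) = \mathcal{N}(\lcs_\infty(L))\chi(M) \in F[Z][T]$, hence $\denom(\chi(M)) \mid \mathcal{N}(\lcs_\infty(L))$. The multiplicativity $\chi(M) = \chi(A)\chi(L)$ from \autoref{lemma:pcurvprop}(iii), together with Gauss's lemma in the UFD $F[Z][T]$ applied to these three polynomials (each monic in $T$ over $F(Z)$), makes the $F[Z]$-denominators multiplicative. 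Therefore $D = \denom(\chi(L)) \mid \denom(\chi(M)) \mid \mathcal{N}(\lcs_\infty(L))$.

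\emph{Lower bound via Cayley--Hamilton.} By \autoref{lemma:pcurvprop}(ii), $\chi(L) \in \mathcal{D}L$, so $\chi(L) = BL$ for some $B \in \mathcal{D}$. Multiplying through by $D$ gives $D\chi(L) = (DB)L \in F[Z][T] \subset \mathcal{P}$, exhibiting $DB$ as a desingularizer of $L$. Since $\chi(L)$ is monic in $T$, the operator $D\chi(L)$ has $\tau$-order $p\cdot\ord(L)$ with leading $\tau$-coefficient $D$; because $D \in F[Z]$ is $\tau$-invariant and already monic, $\lcs(D\chi(L)) = D$. Hence $D \in \mathcal{I}_\infty$, giving $\lcs_\infty(L) \mid D$ in $F[x]$; by the $\tau$-invariance of $D$, each shift $\tau^i(\lcs_\infty(L))$ also divides $D$.

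\emph{Promoting shift-wise divisibility to $\mathcal{N}(\lcs_\infty(L)) \mid D$.} Group the $F[x]$-irreducible factors $q$ of $\lcs_\infty(L)$ by $\tau$-orbit size. If $q$ has orbit of size $p$, its shifts are pairwise coprime in $F[x]$, so from $\tau^i(q)^e \mid D$ for each $i$ one concludes $\mathcal{N}(q)^e = \prod_i \tau^i(q)^e \mid D$ by coprime multiplication. The \textbf{main obstacle} is the case $q \in F[Z]$ (orbit of size $1$, $\mathcal{N}(q) = q^p$), where Cayley--Hamilton delivers only $q^e \mid D$ but the theorem demands $q^{pe} \mid D$. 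To close this gap I would track the pole order of the constant coefficient of $\chi(L)$, which equals $\pm\mathcal{N}(a_0)/\mathcal{N}(a_n)$ for $L = \sum a_i\tau^i$ (the determinant of the $p$-curvature matrix, via $\mathcal{N}$ of the determinant of the companion matrix): since $q^{pe}$ divides $\mathcal{N}(a_n)$ but $q \nmid a_0$ forces no cancellation, this alone yields $q^{pe}\mid D$. The remaining sub-case where $q$ divides both $a_0$ and $a_n$ is handled by the trailing-coefficient analog of $\lcs_\infty$, or equivalently by reducing via the content-multiplicativity of $\tilde{\chi}$ to the Gaussian case and computing there directly.
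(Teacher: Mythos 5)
Your upper bound $\denom(\chi(L)) \mid \mathcal{N}(\lcs_\infty(L))$ is correct and is essentially the paper's Lemma~\ref{lemma:denomboundedbytruesing}: optimal desingularizer, multiplicativity of denominators (Lemma~\ref{lemma:easy}), and the bound from Lemma~\ref{lemma:pcurvprop}(v).

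Your lower bound, however, has a genuine gap. The Cayley--Hamilton observation $\lcs(D\chi(L)) = D$ (hence $\lcs_\infty(L)\mid D$) is a nice idea, and combined with $\tau$-invariance of $D$ it yields $\tau^i(\lcs_\infty(L))\mid D$ for all $i$. But that only proves $\operatorname{lcm}_{i}\bigl(\tau^i(\lcs_\infty(L))\bigr)\mid D$, not $\prod_i \tau^i(\lcs_\infty(L)) = \mathcal{N}(\lcs_\infty(L))\mid D$, and these differ whenever two distinct shifts $\tau^i(q),\tau^j(q)$ of the \emph{same} irreducible $q$ both divide $\lcs_\infty(L)$. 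Concretely, if $\lcs_\infty(L) = q\,\tau(q)$ with $q$ of orbit size $p>2$, then $\mathcal{N}(\lcs_\infty(L)) = \mathcal{N}(q)^2$, yet $D = \mathcal{N}(q)$ already satisfies $\tau^i(\lcs_\infty(L))\mid D$ for every $i$. So your coprime-multiplication step silently assumes each orbit meets $\lcs_\infty(L)$ in at most one representative; nothing guarantees this. The $\tau$-invariant-factor case (your ``main obstacle'') is not actually the only one. Your proposed repairs --- reading off the pole order of the constant $T$-coefficient $\pm\mathcal{N}(a_0)/\mathcal{N}(a_n)$, then deferring the $q\mid a_0$ subcase to a trailing analog or to ``reducing to the Gaussian case'' --- do not close this. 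The constant-coefficient trick only controls one coefficient of $\chi(L)$ and only handles one sub-sub-case; the trailing-coefficient analog is irrelevant to a theorem stated entirely in terms of $\lcs_\infty$; and ``reduce to the Gaussian case'' is exactly what the paper does, but doing so rigorously requires finding a Gaussian left multiple of $L$ over a transcendental extension (via $\mathrm{Prim}(\mathrm{LCLM}(c_k\tau^k+\cdots+c_0,L))$, Theorem~\ref{thm:lclmdesing}) and then arguing the extra denominator $\denom(\chi(A))$ has no nontrivial factor in $F[x]$ (Lemmas~\ref{lemma:lcdenombound}--\ref{lemma:chidesingbound}). In the Gaussian case itself the paper gets equality not from Cayley--Hamilton on $\chi$ but from applying $\tilde\chi$ to the Cayley--Hamilton identity $f=QL$ (with $Q\in\mathcal{P}$ since $L$ is Gaussian), giving $\tilde\chi(Q)\tilde\chi(L)=f^p$, and invoking Gauss's lemma to conclude $\tilde\chi(L)$ is primitive in $F[Z][T]$. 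That is the step your argument is missing a substitute for.
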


The theorem quickly implies two corollaries, expressed in terms of the following definition.
\begin{definition}
Let $r_1,r_2\in F(x)-\{0\}$. We say that $r_1$ and $r_2$ are \emph{shift equivalent}, denoted $r_1\sim r_2$, if $\tau-\frac{r_1}{r_2}$ has a non-zero solution in $F(x)$, in other words,
if there exists $f\in F(x)-\{0\}$ for which $\frac{r_1}{r_2}=\frac{\tau(f)}{f}$. 
\end{definition}
If  $r(x)$ has a factor $q(x)$ in the numerator or denominator, and one replaces $q(x)$ by its shift $q(x+1)$,
then the result is shift-equivalent to $r(x)$. Note that $r_1\sim r_2$ if and only if $\mathcal{N}(r_1)=\mathcal{N}(r_2)$.

\begin{corollary}
If $\mathcal{D}/\mathcal{D}L_1\cong \mathcal{D}/\mathcal{D}L_2$ for  $L_1,L_2\in \mathcal{D}$, % are gauge equivalent, 
then $\LCs_\infty(L_1)$ and $\LCs_\infty(L_2)$ are shift equivalent, so $L_1$ and $L_2$ have the same true singularities up to shifts. %$\mathcal{N}(\LC(L_1))=\mathcal{N}(\LC(L_2))$.
\end{corollary}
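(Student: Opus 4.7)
The plan is to reduce the corollary to two inputs: the invariance of $\chi(L)$ under isomorphism of $\mathcal{D}$-modules and the main theorem. First, since $F(x)[T] \subseteq \mathcal{D}$ (recall $T = \tau^p$), any $\mathcal{D}$-module isomorphism $\varphi \colon \mathcal{D}/\mathcal{D}L_1 \to \mathcal{D}/\mathcal{D}L_2$ is automatically an $F(x)[T]$-module isomorphism. Hence $\varphi$ is an $F(x)$-linear bijection that intertwines the two $F(x)$-linear maps induced by $T$, i.e.\ the $p$-curvatures of $L_1$ and $L_2$. Conjugate endomorphisms of a finite-dimensional $F(x)$-vector space have the same characteristic polynomial, so $\chi(L_1) = \chi(L_2)$.

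Second, I would pass to denominators. The quantities $\chi(L)$ and $\lcs_\infty(L)$ depend only on $\mathrm{Prim}(L)$: the module $\mathcal{D}/\mathcal{D}L$ is unchanged when $L$ is scaled by a nonzero element of $F(x)$, and $\lcs_\infty$ is already defined via primitive operators. So without loss of generality $L_1, L_2 \in \mathcal{P}$ are primitive, and \autoref{lemma:denom_pcurv} applies to give
\[
\mathcal{N}(\lcs_\infty(L_1)) \;=\; \denom(\chi(L_1)) \;=\; \denom(\chi(L_2)) \;=\; \mathcal{N}(\lcs_\infty(L_2)).
\]

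Third, I would invoke the characterization stated just before the corollary, namely $r_1 \sim r_2$ iff $\mathcal{N}(r_1) = \mathcal{N}(r_2)$, to conclude $\lcs_\infty(L_1) \sim \lcs_\infty(L_2)$. Since the true singularities of $L_i$ are by definition the irreducible factors of $\lcs_\infty(L_i)$ with their multiplicities, and shift equivalence of two monic polynomials in $F[x]$ means their irreducible factorizations agree after replacing each irreducible factor $q(x)$ by a shift $q(x+k)$, this yields the claim about ``same true singularities up to shifts''.

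The only substantive point is the first step, but it is essentially a tautology once one notices that $T \in \mathcal{D}$, so that a $\mathcal{D}$-linear isomorphism is in particular $T$-equivariant; no hard computation is required. The weight of the corollary sits entirely in \autoref{lemma:denom_pcurv}, and the corollary is a clean consequence.
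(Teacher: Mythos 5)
Your proof is correct and is the natural argument one would supply; the paper offers no explicit proof (it only remarks that the theorem "quickly implies two corollaries"), but your route---observe that a $\mathcal{D}$-module isomorphism restricts to an $F(x)[T]$-module isomorphism, hence conjugates the two $p$-curvatures and forces $\chi(L_1)=\chi(L_2)$, then take $\denom$, apply Theorem~\ref{lemma:denom_pcurv}, and finish with the equivalence $r_1\sim r_2 \iff \mathcal{N}(r_1)=\mathcal{N}(r_2)$---is exactly the intended one. The reduction to primitive representatives is the right way to handle $L_1,L_2\in\mathcal{D}$ versus $\mathcal{P}$, and your reading of shift equivalence for monic polynomials as matching of irreducible factors with multiplicity up to shifts is the correct interpretation of "same true singularities up to shifts."
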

%\begin{proof}
%It follows from the definition that $\chi(L_1)=\chi(L_2)$. A simple application of \autoref{lemma:denom_pcurv} shows $\mathcal{N}(L_1)=\mathcal{N}(L_2)$.
%\end{proof}
\begin{corollary}
For $L_1,L_2\in \mathcal{D}$, if %$L=L_1L_2$, or $L=\mathrm{LCLM}(L_1,L_2)$ and $\mathrm{GCRD}(L_1,L_2)=1$,
\begin{itemize}
    \item $L=L_1L_2$, or
    \item $L=\mathrm{LCLM}(L_1,L_2)$ and $\mathrm{GCRD}(L_1,L_2)=1$
\end{itemize}
then $\LCs_\infty(L)$ and $\LCs_\infty(L_1)\LCs_\infty(L_2)$ are shift equivalent.
%$$\mathcal{N}(\LC(L))=\mathcal{N}(\LC(L_1)\LC(L_2)).$$
%In other words, 
%$\mathcal{N}(\LC(L))=\mathcal{N}(\LC(L_1)\LC(L_2))$.
\end{corollary}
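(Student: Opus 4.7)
The plan is to combine the main theorem \autoref{lemma:denom_pcurv} with the multiplicativity of $\chi$ recorded in \hyperref[lemma:pcurvprop]{Lemma~\ref*{lemma:pcurvprop}(iii)--(iv)}. First I would observe that in both cases of the hypothesis we have $\chi(L)=\chi(L_1)\chi(L_2)$: directly from (iii) when $L=L_1L_2$, and from (iv) when $L=\mathrm{LCLM}(L_1,L_2)$ with $\mathrm{GCRD}(L_1,L_2)=1$. Since $\chi$ depends only on the left ideal $\mathcal{D}L_i$ and $\lcs_\infty$ is invariant under scaling by $F(x)^\times$, there is no loss in assuming $L,L_1,L_2\in\mathcal{P}$ are primitive; the main theorem then applies and yields $\denom(\chi(L))=\mathcal{N}(\lcs_\infty(L))$ together with the analogous identities for $L_1$ and $L_2$.

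Since $\mathcal{N}$ is multiplicative and $r_1\sim r_2$ iff $\mathcal{N}(r_1)=\mathcal{N}(r_2)$, it suffices to show $\denom(\chi(L_1)\chi(L_2))=\denom(\chi(L_1))\denom(\chi(L_2))$. The key observation is that each $\chi(L_i)\in F(Z)[T]$ is \emph{monic} in $T$, so the claim reduces to a general lemma: for monic $f,g\in F(Z)[T]$ one has $\denom(fg)=\denom(f)\denom(g)$. I would verify this via a valuation argument: for every irreducible $q\in F[Z]$, monicity of $f$ forces $\min_i v_q(c_i)\leq 0$ among the coefficients $c_i$ of $f$, hence $v_q(\denom(f))=-\min_i v_q(c_i)$ and the coefficients of $\denom(f)\cdot f$ have minimum $q$-valuation zero, i.e.\ $\denom(f)\cdot f$ is primitive in $F[Z][T]$. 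Gauss's lemma applied to $(\denom(f)f)(\denom(g)g)=\denom(f)\denom(g)\cdot fg$ then gives the desired multiplicativity.

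Assembling everything, $\mathcal{N}(\lcs_\infty(L))=\denom(\chi(L))=\denom(\chi(L_1))\denom(\chi(L_2))=\mathcal{N}(\lcs_\infty(L_1))\mathcal{N}(\lcs_\infty(L_2))=\mathcal{N}(\lcs_\infty(L_1)\lcs_\infty(L_2))$, so $\lcs_\infty(L)\sim\lcs_\infty(L_1)\lcs_\infty(L_2)$. The only step with genuine content is the denominator-multiplicativity lemma for monic polynomials over $F(Z)$; everything else is immediate from the main theorem and \autoref{lemma:pcurvprop}.
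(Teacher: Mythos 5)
Your proof is correct and follows essentially the same route the paper intends: apply the main theorem to each of $L$, $L_1$, $L_2$, use \hyperref[lemma:pcurvprop]{Lemma~\ref*{lemma:pcurvprop}(iii)/(iv)} to get $\chi(L)=\chi(L_1)\chi(L_2)$, then pass to denominators and use that shift equivalence is detected by $\mathcal{N}$. The one step you flag as having ``genuine content'' --- that $\denom(fg)=\denom(f)\denom(g)$ for monic $f,g\in F(Z)[T]$ --- is already in the paper as Lemma~\ref{lemma:easy} (applied to $L_1L_2$, noting $\chi(L_1)\chi(L_2)=\chi(L_1L_2)$ in both cases), so you could have cited it rather than re-deriving it via valuations; your valuation argument is nonetheless a correct proof of the same fact.
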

These corollaries are not true in characteristic 0, so we did not expect Theorem~\ref{lemma:denom_pcurv}.

\section{Proof of the Main Theorem}
This section is devoted to the proof of \autoref{lemma:denom_pcurv}. We start with an easy lemma.
\begin{lemma}\label{lemma:easy} For any $A, L \in \mathcal{D}$
$$\denom(\chi(AL))=\denom(\chi(A))\denom(\chi(L)).$$
\end{lemma}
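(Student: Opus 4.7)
The plan is to reduce the claim to an application of Gauss's lemma in the commutative ring $F[Z][T]$. The starting point is \hyperref[lemma:pcurvprop]{Lemma~\ref*{lemma:pcurvprop}}, items (i) and (iii), which together say that $\chi(A)$, $\chi(L)$, and $\chi(AL)$ all lie in $F(Z)[T]$ and satisfy $\chi(AL)=\chi(A)\chi(L)$. Moreover, all three are monic in $T$, since they are characteristic polynomials of $F(x)$-linear endomorphisms. So the lemma reduces to the purely algebraic statement that denominators are multiplicative on products of monic polynomials in $F(Z)[T]$.

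Next I would record the following characterization: for any monic $P\in F(Z)[T]$, $\denom(P)$ is the unique monic $d\in F[Z]$ such that $dP\in F[Z][T]$ is primitive, i.e.\ the gcd of its coefficients in the PID $F[Z]$ equals $1$. The "only if" direction is immediate from the definition (if a common factor existed, $d$ would not be minimal); the "if" direction follows because if $d^{*}P$ is also primitive and $d\mid d^{*}$, then $d^{*}P=(d^{*}/d)(dP)$ forces $d^{*}/d=1$. This characterization is where monicity of $P$ matters: it guarantees that the leading coefficient of $dP$ equals $d$, so any candidate denominator must divide $d$.

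The core step is then Gauss's lemma applied in the polynomial ring $F[Z][T]$ over the UFD $F[Z]$: the product of two primitive polynomials is primitive. Setting $P_A:=\denom(\chi(A))\,\chi(A)$ and $P_L:=\denom(\chi(L))\,\chi(L)$, both are primitive in $F[Z][T]$, so $P_AP_L=\denom(\chi(A))\denom(\chi(L))\,\chi(AL)$ is primitive as well. Since $\chi(AL)$ is monic, the characterization above identifies $\denom(\chi(A))\denom(\chi(L))$ as $\denom(\chi(AL))$, which is exactly the claim.

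I do not anticipate a real obstacle here. The only point that requires a bit of care is the monicity-based characterization of $\denom$, since in general for non-monic rational polynomials one only has $\denom(PQ)\mid \denom(P)\denom(Q)$ with possible cancellation; monicity of characteristic polynomials is precisely what rules that cancellation out.
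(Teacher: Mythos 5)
Your proof is correct and follows essentially the same route as the paper's: both reduce the claim to Gauss's lemma in $F[Z][T]$ via $\chi(AL)=\chi(A)\chi(L)$ and then use monicity of characteristic polynomials to identify denominators with (reciprocals of) contents. The paper states this more tersely by invoking multiplicativity of content directly, whereas you spell out the equivalent characterization of $\denom(P)$ as the unique monic $d$ making $dP$ primitive, but the underlying idea is identical.
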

\begin{proof}
Lemma~\ref*{lemma:pcurvprop}(iii) and Gauss's lemma for $F(Z)[T]$ gives
$$\mathrm{Cont}(\chi(A L))=\mathrm{Cont}(\chi(A))\mathrm{Cont}(\chi(L)).$$
But $\chi(\cdot)$ is monic by definition, and the denominator of a monic polynomial is the reciprocal of the content.
\end{proof}

%Notations:\todo{maybe move this to an earlier place. Also need to define denominator} when we say a polynomial in $\Fp(Z)[T]$ is primitive, it means its coefficients are in $\Fp[Z]$ and have $\gcd$ 1; contents and primitive parts of polynomials in $\Fp(Z)[T]$ are also taken in this sense.

% \subsection{Gaussian Operators}
% Merge this section with the next one.
\subsection{Special case, Gaussian operators}

\begin{lemma}\label{lemma:gaussian} Let $L \in \mathcal{P}$. The following are equivalent.
\begin{enumerate}
\item[1.] $L$ is Gaussian, i.e. $\forall_{A \in \mathcal{D}} \ A L \in \mathcal{P} \Longrightarrow A \in \mathcal{P}$.
\item[2.] Every desingularizer is trivial.
\item[3.] ${\rm Cl}(L)=\mathcal{P}L$, where  ${\rm Cl}(L) := \mathcal{D}L \bigcap \mathcal{P}$. (This is called the \emph{Weyl closure} in  \cite{Tsai00weylclosure}.) %(recall that ${\rm Cl}(L) = \mathcal{D}L \bigcap \mathcal{P}$)
\item[4.] $\lcs(L)=\LCs_\infty(L)$, i.e. there are no apparant singularities.
\end{enumerate}
\end{lemma}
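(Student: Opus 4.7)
The plan is to prove $(1)\Leftrightarrow(2)\Leftrightarrow(3)$ by unpacking the definitions, and then close the loop with $(2)\Leftrightarrow(4)$. We may assume $L$ is primitive, since otherwise $1/\mathrm{Cont}(L)$ is a non-trivial desingularizer and statements (1)--(3) all fail. Conditions (1) and (2) are literally the same: a desingularizer is by definition any $A\in\mathcal{D}$ with $AL\in\mathcal{P}$, and triviality means $A\in\mathcal{P}$. For $(2)\Leftrightarrow(3)$, note that $\mathrm{Cl}(L)=\mathcal{D}L\cap\mathcal{P}$ is the set of all desingularizations of $L$, while $\mathcal{P}L$ is the subset of trivial ones; equality of these two sets is the same as every desingularizer being trivial.

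For $(2)\Rightarrow(4)$, assume every desingularizer is in $\mathcal{P}$. Then
\[\mathcal{I}_\infty=\{0\}\cup\{\lcs(AL):A\in\mathcal{P}\},\]
and a direct computation using the commutation rule $\tau\cdot r(x)=r(x+1)\tau$ shows that for $A\in\mathcal{P}$ the monic polynomial $\lcs(AL)$ is a scalar multiple of $\tau^{-\ord(L)}(\lcs(A))\cdot\lcs(L)$, hence divisible by $\lcs(L)$. Therefore $\mathcal{I}_\infty\subseteq(\lcs(L))$; the reverse inclusion comes from taking $A=1$, and comparing monic generators gives $\LCs_\infty(L)=\lcs(L)$.

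The substantive implication is $(4)\Rightarrow(2)$. Given $A\in\mathcal{D}$ with $AL\in\mathcal{P}$, apply Euclidean division coefficient-by-coefficient in $F[x]$ to decompose $A=A_1+A_2$, where $A_1\in\mathcal{P}$ and every coefficient of $A_2=\sum_i b_i\tau^i$ is either zero or a proper fraction in lowest terms (numerator degree strictly less than denominator degree). Since $A_2L=AL-A_1L\in\mathcal{P}$, it suffices to prove $A_2=0$. Suppose for contradiction that $A_2\neq 0$, let $k':=\ord(A_2)$, and let $b_{k'}\neq 0$ be its leading coefficient. Writing $L=\sum_{j=0}^n a_j\tau^j$, the leading $\tau$-coefficient of $A_2L$ is $b_{k'}\cdot a_n(x+k')\in F(x)$ and must actually lie in $F[x]$. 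Since $b_{k'}$ is a nonzero proper fraction, $\deg_x(b_{k'})<0$, so $\lcs(A_2L)$ is a nonzero polynomial of degree strictly less than $\deg(\lcs(L))$. But hypothesis (4) puts $\lcs(A_2L)\in\mathcal{I}_\infty=(\lcs(L))$, forcing it to be either zero or of degree at least $\deg(\lcs(L))$: contradiction. Hence $A_2=0$ and $A=A_1\in\mathcal{P}$.

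The main obstacle is the degree bookkeeping in this last implication: one must verify that the leading $\tau$-term $b_{k'}\cdot a_n(x+k')$ of $A_2L$ is simultaneously forced to be a polynomial in $x$ (by $A_2L\in\mathcal{P}$) and strictly lower in $x$-degree than $\lcs(L)$ (by the proper-fraction reduction). Everything else is a routine unwinding of the definitions of $\mathrm{Cl}(L)$, $\mathcal{I}_\infty$, and $\LCs_\infty(L)$.
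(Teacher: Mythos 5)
Your proof is correct and follows the same route as the paper: treat (1), (2), (3) as definitional restatements of one another, derive (4) from the divisibility $\lcs(L)\mid\lcs(AL)$ when $A\in\mathcal P$, and prove $(4)\Rightarrow(1)$ by splitting $A=A_1+A_2$ with $A_1\in\mathcal P$ and $A_2$ having proper-fraction coefficients, then observing that a nonzero $A_2$ would force $\lcs(A_2L)$ to be a nonzero polynomial of degree strictly less than $\deg\lcs(L)$, contradicting $\mathcal I_\infty=(\lcs(L))$. The paper calls the splitting a partial fraction decomposition and leaves (2)$\Rightarrow$(4) as immediate; you supply the details, but the argument is the same.
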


\begin{proof}
Items 2 and 3 are reformulations of item 1, and immediately imply item 4. %Item 4 implies item 1 is in fact a direct application of Lemma~\ref{lemma:desingularizerdenombound}.
It remains to show that item 4 implies item 1. Suppose that $\lcs(L)=\LCs_\infty(L)$ and $AL \in \mathcal{P}$.
To prove: $A \in \mathcal{P}$.

By partial fraction decomposition, $A = A_1 + A_2$ where $A_1 \in \mathcal{P}$ and $A_2 = \sum q_i \tau^i$ with the numerator of $q_i$ having lower degree than its denominator.
Since $AL$ and $A_1 L$ are in $\mathcal{P}$, their difference $A_2 L$ is in $\mathcal{P}$ as well. If $A_2 \neq 0$, then the leading coefficient of $A_2 L$ will have lower degree than $\lcs(L)$,
contradicting item~4. Thus $A_2 = 0$ and hence $A \in \mathcal{P}$. 
%
% To prove the other direction, we assume there exists a non-trivial desingularizer $A$ and derive contradiction. Let $A=\sum_{i=0}^k b_i\tau^i$ where $b_i\in F(x)$. Using division with remainder, we write
% $$A=\sum_{i=0}^k q_i\tau^i +\sum_{i=0}^k\frac{r_i}{d_i}\tau^i,$$
% where $q_i,r_i,d_i\in F[x]$ and $\deg(r_i)<\deg(d_i)$. Let $A'=\sum_{i=0}^k \frac{r_i}{d_i}\tau^i$. Then
% $$A'L=AL-(\sum_{i=0}^k q_i\tau^i)L\in \mathcal{P}.$$
% Since coefficients of $A'L$ are all proper fractions, we have $ \deg(\lcs(A'L))<\deg(\lcs(L))$. Since $\lcs(L)\mid \lcs(A'L)$ by definition, it contradicts to the assumption that $\lcs(L)=\lcs(L)$.
\end{proof}

% \subsection{Special case, Gaussian operators}
% Merge this section with the previous one.

\hyperref[lemma:pcurvprop]{Lemma~\ref*{lemma:pcurvprop}(v)} says  that $\tilde{\chi}(L) = \mathcal{N}(\lcs(L)) \chi(L) \in F[Z][T]$ when $L \in \mathcal{P}$, in other words
\begin{equation} \denom( \chi(L) ) \mid  \mathcal{N}(\lcs(L)). \label{denombound} \end{equation}
Here we give a sharper denominator bound.
\begin{lemma}\label{lemma:denomboundedbytruesing}
For $L\in F[x][\tau]$, $\denom(\mathrm{\chi}(L))\mid \mathcal{N}(\LCs_\infty(L)).$
\end{lemma}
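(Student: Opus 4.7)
The plan is to invoke an optimal desingularizer and then apply the two ingredients already proved: the denominator bound \eqref{denombound} for operators in $\mathcal{P}$, and the multiplicativity of denominators from Lemma~\ref{lemma:easy}.

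First I would pick an optimal desingularizer at order $\infty$: some $A\in\mathcal{D}$ with $AL\in\mathcal{P}$ and $\lcs(AL)=\LCs_\infty(L)$. Such an $A$ exists because $\mathcal{I}_\infty$ is principal ($F[x]$ is a PID), as noted right after the definition of optimal desingularizers. Next, since $AL\in\mathcal{P}$, the bound \eqref{denombound} applied to $AL$ rather than to $L$ gives
$$\denom(\chi(AL))\ \mid\ \mathcal{N}(\lcs(AL))\ =\ \mathcal{N}(\LCs_\infty(L)).$$

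Finally, I would apply Lemma~\ref{lemma:easy} to factor the left side:
$$\denom(\chi(AL))\ =\ \denom(\chi(A))\,\denom(\chi(L)),$$
so $\denom(\chi(L))$ divides $\denom(\chi(AL))$, which divides $\mathcal{N}(\LCs_\infty(L))$, finishing the proof.

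There is essentially no obstacle: the result is immediate from combining the sharper choice of leading coefficient (provided by $A$) with the weaker bound \eqref{denombound}. The only subtle point is making sure that Lemma~\ref{lemma:easy} is applicable even when $A\notin\mathcal{P}$; but the lemma is stated for arbitrary $A,L\in\mathcal{D}$, so this is fine. The divisibility $\denom(\chi(L))\mid\denom(\chi(AL))$ is then a trivial consequence of the product formula, since each denominator is a monic polynomial in $F[Z]$.
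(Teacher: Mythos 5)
Your proposal is correct and uses exactly the same approach as the paper: take an optimal desingularizer $A$, apply the bound \eqref{denombound} to $AL\in\mathcal{P}$, and use Lemma~\ref{lemma:easy} to conclude $\denom(\chi(L))\mid\denom(\chi(AL))\mid\mathcal{N}(\LCs_\infty(L))$. No differences worth noting.
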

\begin{proof}
%There exists a desingularizer $A\in \mathcal{D}$ such that  $\lcs(AL)=\lcs(L)$\todo{cite}, which leads to the relation
Let $A\in \mathcal{D}$ be an optimal desingularizer of $L$, then $\lcs(AL)=\LCs_\infty(L)$.
%and hence $\mathcal{N}(\lcs(AL))=\mathcal{N}(\LCs_\infty(L)).$
%
%
% \hyperref[lemma:pcurvprop]{Lemma~\ref*{lemma:pcurvprop}(iii)} says
% $\chi(AL)=\chi(A)\chi(L).$
% Applying Gauss's lemma to $F[Z][T]\subseteq F(Z)[T]$, we have
% $$\mathrm{Cont}(\chi(AL))=\mathrm{Cont}(\chi(A))\mathrm{Cont}(\chi(L)),$$ where contents are taken with respect to $T$.
% Since characteristic polynomials are always monic by definition, their denominators are reciprocals of their contents. 
% Hence
From Lemma~\ref{lemma:easy} and Equation~(\ref{denombound}) applied to~$AL$,
$\denom(\chi(L)) \mid \denom(\chi(AL)) \mid \mathcal{N}(\lcs(AL))=\mathcal{N}(\LCs_\infty(L)).$
\end{proof}
Next we show that our denominator bound is exact for Gaussian operators. The next section
will prove the general case by exploiting the fact that any operator has a Gaussian multiple.																		 
\begin{lemma}\label{lemma:thmholdsforgaussian}
If $L\in F[x][\tau]$ is Gaussian, then
$$\denom(\mathrm{\chi}(L))= \mathcal{N}(\lcs_\infty(L)).$$
\end{lemma}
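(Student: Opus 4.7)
The ``easy'' divisibility $\denom(\chi(L)) \mid \mathcal{N}(\lcs_\infty(L))$ is already provided by Lemma~\ref{lemma:denomboundedbytruesing}, and since $L$ is Gaussian we have $\lcs_\infty(L) = \lcs(L)$ by Lemma~\ref{lemma:gaussian}(4). The plan is therefore to establish the reverse divisibility, $\mathcal{N}(\lcs(L)) \mid \denom(\chi(L))$.

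To do this, I would invoke Cayley--Hamilton (Lemma~\ref{lemma:pcurvprop}(ii)) to write $\chi(L) = BL$ with $B \in \mathcal{D}$. If $d := \denom(\chi(L)) \in F[Z]$, then $d\chi(L) = (dB)L \in F[Z][T] \subset \mathcal{P}$, and the Gaussian hypothesis forces $dB \in \mathcal{P}$. Thus every coefficient denominator of $B$ divides $d$ in $F[x]$, and since $d \in F[Z]$ is $\tau$-invariant, the smallest admissible $d$ equals the $\tau$-invariant closure of $\mathrm{lcm}_i\, \denom(b_i)$. The remaining task is to show this closure equals $\mathcal{N}(\lcs(L))$.

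For the lower bound I would analyze the coefficients $b_i$ via the recursion from $BL = \chi(L)$: solving the $\tau^j$-coefficient equations in descending order yields $b_{n(p-1)} = 1/a_n(x+n(p-1))$, and each lower equation inserts an additional division by a shift $a_n(x+i)$. In characteristic~$p$ the identity $a_n(x+p) = a_n(x)$ ensures these shifts range over all $p$ residues, whose product is $\mathcal{N}(a_n) = \mathcal{N}(\lcs(L))$. Pairing this top-down analysis with the formula $b_0 = c_0(x)/a_0(x)$ for the constant coefficient --- where $c_0 = \pm\mathcal{N}(a_0)/\mathcal{N}(a_n)$ emerges from the determinant of the $p$-curvature matrix --- provides a second handle that, together with the top-down recursion, produces enough shifts of $a_n$ across the various $b_i$ to reconstitute $\mathcal{N}(\lcs(L))$ in the $\tau$-invariant closure.

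The main obstacle is controlling the multiplicities precisely. The coarse bound $\denom(b_i) \mid \prod_{j=i}^{n(p-1)} a_n(x+j)$ from the recursion can be much larger than $\mathcal{N}(\lcs(L))$, and conversely, accidental cancellations among recursive terms can shrink individual $\denom(b_i)$ below the expected shifts of $a_n$. I expect the cleanest route is a local valuation argument at each irreducible prime $q \in F[Z]$: if $v_q(\mathcal{N}(\lcs(L))) = e$, one must verify that the corresponding shifts of the irreducible factor of $\lcs(L)$ in $F[x]$ contribute exactly $e$ to the $\tau$-invariant closure of $\denom(B)$, with the Gaussian hypothesis ruling out any cancellation that would collapse these contributions (such a collapse would, in fact, manifest as an apparent singularity of $L$).
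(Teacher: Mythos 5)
Your proposal starts from the same place as the paper---Cayley--Hamilton (Lemma~\ref{lemma:pcurvprop}(ii)) gives $\chi(L) = BL$, and the Gaussian hypothesis forces the cleared operator $dB$ into $\mathcal{P}$---but then the two proofs diverge sharply. The paper never looks at the individual coefficients $b_i$: instead it sets $f=\mathrm{Prim}(\chi(L))$, writes $QL=f$ with $Q\in\mathcal{P}$ (by Gaussianness), and applies $\tilde\chi$ to this equation. Multiplicativity of $\tilde\chi$ plus $\tilde\chi(f)=f^p$ (Lemma~\ref{lemma:pcurvprop}(vi)) gives $\tilde\chi(Q)\tilde\chi(L)=f^p$ in $F[Z][T]$, and since $f^p$ is primitive, ordinary Gauss's lemma in the commutative ring $F[Z][T]$ forces $\tilde\chi(L)$ to be primitive; the identity $\denom(\chi(L))=\mathcal{N}(\lcs(L))$ then drops out. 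This sidesteps entirely the coefficient recursion you try to run.

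There is a genuine gap in your route. The logic up to ``$\denom(b_i)\mid d$ for all $i$, hence the $\tau$-invariant closure of $\operatorname{lcm}_i\denom(b_i)$ divides $d$'' is fine, and reduces the claim to showing $\mathcal{N}(\lcs(L))$ divides that closure. But the argument you sketch for this reduction does not close: the top-down recursion only yields the \emph{coarse upper} bound $\denom(b_i)\mid\prod_{j\ge i}a_n(x+j)$, which says nothing about what must survive after cancellation, and the closing sentence---that a collapse ``would manifest as an apparent singularity of $L$''---is exactly the statement to be proved, stated rather than derived. The Gaussian hypothesis has already been spent once (to get $dB\in\mathcal{P}$); you would need to extract from it, at the level of the $b_i$, a precise no-cancellation statement at each prime, and nothing in the recursion gives you that. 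The paper's trick with $\tilde\chi$ is precisely what makes this tractable: it converts the non-commutative factorization $QL=f$ into a commutative one in $F[Z][T]$, where Gauss's lemma controls the content without any coefficient bookkeeping. Without some replacement for that step, the valuation argument you propose is a plan, not a proof.
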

\begin{proof}
%In this proof, a polynomial in $\Fp[Z][T]$ being primitive means it is primitive as a polynomial in $T$ with coefficients in $\Fp[Z]$ (the $\gcd$ of the coefficients is $1$).
Denote $f=\mathrm{Prim}(\chi(L))\in F[Z][T]$.
By \hyperref[lemma:pcurvprop]{Lemma~\ref*{lemma:pcurvprop}(ii)}, $f\in \mathcal{D}L$ so %$\mathrm{Prim}(\chi(L))\in \mathcal{D}L\cap \mathcal{P}$. %F[Z][T]\subseteq F[x][\tau]$ is a rational multiple of $L$,
there exists $Q\in \mathcal{D}$ such that 
\begin{equation}\label{eq:QL}
    QL = f.
\end{equation}
In fact $Q\in\mathcal{P}$ since $L$ is Gaussian.  \hyperref[lemma:pcurvprop]{Lemma~\ref*{lemma:pcurvprop}(v)} says $\tilde{\chi}(Q),\tilde{\chi}(L)\in F[Z][T]$.
%$\mathrm{Prim}(\chi(L))= Q L$.
Applying $\tilde{\chi}$ to Equation~(\ref{eq:QL}), and \hyperref[lemma:pcurvprop]{Lemma~\ref*{lemma:pcurvprop}(vi)}, gives
\begin{equation}\label{eq:chiQL}
    \tilde{\chi}(Q)\tilde{\chi}(L) = \tilde{\chi}(f)=f^p.
\end{equation}
Now $f^p \in F[Z][T]$ is primitive since $f$ is primitive. Then Gauss's lemma implies $\tilde{\chi}(L) \in F[Z][T]$ is primitive. % and thus equals $f$. 
%Since $L$ is Gaussian, $Q\in \mathcal{P}$. and then by \hyperref[lemma:pcurvprop]{Lemma~\ref*{lemma:pcurvprop}(v)}, $\tilde{\chi}(Q),\tilde{\chi}(L)\in F[Z][T]$. It follows from \hyperref[lemma:pcurvprop]{Lemma~\ref*{lemma:pcurvprop}(iii)} that
%$\tilde{\chi}(\mathrm{Prim}(\chi(L)))=\tilde{\chi}(Q)\tilde{\chi}(L)\in F[Z][T]$. 
%Applying \hyperref[lemma:pcurvprop]{Lemma~\ref*{lemma:pcurvprop}(vi)} to $\mathrm{Prim}(\chi(L))\in F[Z][T]$, we see that
%$\tilde{\chi}(\mathrm{Prim}(\chi(L)))=(\mathrm{Prim}(\chi(L)))^p,$
%which is primitive with respect to $T$ by Gauss's lemma. Another implication of Gauss's lemma is that $\tilde{\chi}(L)$ is also primitive. 
It follows that
$$\denom(\chi(L))=\lcs(\tilde{\chi}(L))=\mathcal{N}(\lcs(L))=\mathcal{N}(\LCs_\infty(L))$$
where the last equality comes from Lemma~\ref{lemma:gaussian}, part 4.
\end{proof}
\subsection{Proof for the general case}
\begin{lemma}\label{lemma:lcdenombound}
Suppose $L\in\mathcal{P}$ and $A=\sum_{i=0}^k \frac{n_i}{d_i}\in \mathcal{D}$ is a desingularizer of $L$, where $\frac{n_i}{d_i}\in F(x)$ is in lowest terms for each $i$. Then $\mathcal{N}(d_k)\mid \mathcal{N}(\mathfrak{rp}_\infty(L))$.
\end{lemma}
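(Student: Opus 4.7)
\medskip

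\noindent\textbf{Proof proposal.} The plan is to compare the leading coefficient of $AL$ (which is a polynomial, since $A$ is a desingularizer) with $\lcs(L)$, extract a divisibility relation on the $k$-th denominator $d_k$ of $A$, and then push it through the norm $\mathcal{N}$.

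First I would unwind the multiplication rule. Write $\ord(L)=n$, so $\ord(AL)=k+n$, and observe that the coefficient of $\tau^{k+n}$ in $AL$ is $(n_k/d_k)\,\tau^k(\lc(L))$. Applying $\tau^{-(k+n)}$ gives
\[
\lcs(AL)\;=\;\tau^{-(k+n)}\!\left(\tfrac{n_k}{d_k}\right)\,\lcs(L).
\]
Since $A L \in \mathcal{P}$, the element $\lcs(AL)$ lies in the ideal $\mathcal{I}_\infty$, so by definition $\LCs_\infty(L)$ divides $\lcs(AL)$. Dividing the displayed equation by $\LCs_\infty(L)$ and recalling $\mathfrak{rp}_\infty(L)=\lcs(L)/\LCs_\infty(L)$, the quotient
\[
\frac{\lcs(AL)}{\LCs_\infty(L)}\;=\;\frac{\tau^{-(k+n)}(n_k)}{\tau^{-(k+n)}(d_k)}\,\mathfrak{rp}_\infty(L)
\]
lies in $F[x]$. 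Since $\gcd(n_k,d_k)=1$ in $F[x]$ and shifts preserve coprimality, I conclude
\[
\tau^{-(k+n)}(d_k)\;\bigm|\;\mathfrak{rp}_\infty(L)\qquad\text{in }F[x].
\]

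It remains to transfer this into a divisibility of norms. Applying $\mathcal{N}$ to both sides gives $\mathcal{N}\!\left(\tau^{-(k+n)}(d_k)\right)\mid \mathcal{N}(\mathfrak{rp}_\infty(L))$, so the only thing to check is that $\mathcal{N}$ is shift invariant, i.e.\ $\mathcal{N}(\tau^m f)=\mathcal{N}(f)$ for every integer $m$ and every $f \in F(x)$. This is immediate in characteristic $p$: $\mathcal{N}(f)$ takes values in $F(Z)$, which is the $\tau$-fixed subfield of $F(x)$, so any shift of $f$ has the same norm. Combining the two gives $\mathcal{N}(d_k)\mid \mathcal{N}(\mathfrak{rp}_\infty(L))$, as desired.

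There is essentially no obstacle here beyond the bookkeeping: the whole argument is a one-line computation of $\lcs(AL)$ followed by a coprimality/cancellation step. The only potentially subtle point is the shift-invariance of $\mathcal{N}$, but that is an immediate consequence of $\mathcal{N}(F(x))\subseteq F(Z)$ and the fact that $Z=x^p-x$ is $\tau$-fixed.
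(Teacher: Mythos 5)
Your proof is correct and takes essentially the same route as the paper: compute $\lcs(AL)=\tau^{-(k+n)}(n_k/d_k)\,\lcs(L)$, use $\LCs_\infty(L)\mid\lcs(AL)$ to get $\tau^{-(k+n)}(n_k/d_k)\,\mathfrak{rp}_\infty(L)\in F[x]$, invoke coprimality of $n_k,d_k$ (preserved by shifts) to conclude $\tau^{-(k+n)}(d_k)\mid\mathfrak{rp}_\infty(L)$, and finally apply the shift-invariance of $\mathcal{N}$. The only cosmetic difference is that you spell out the intermediate step of dividing through by $\LCs_\infty(L)$, whereas the paper goes directly from the divisibility $\lcs_\infty(L)\mid\tau^{-k-n}(n_k/d_k)\lcs(L)$; the content is identical.
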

\begin{proof}
Let $n$ be the order of $L$. The definition of $\lcs_\infty$ and the product of the leading terms of $A$ and $L$ gives
$$\lcs_\infty(L) \mid \tau^{-k-n}(\frac{n_k}{d_k})\lcs(L)$$
and hence $\tau^{-k-n}(\frac{n_k}{d_k})\mathfrak{rp}_\infty(L)\in F[x]$.
Since $\frac{n_k}{d_k}$ is a reduced fraction, we have $\tau^{-k-n}(d_k)\mid \mathfrak{rp}_\infty(L)$, which leads to
$$\mathcal{N}(d_k)=\mathcal{N}(\tau^{-k-n}(d_k))\mid \mathcal{N}(\mathfrak{rp}_\infty(L)).$$
\end{proof}

\begin{lemma}\label{lemma:chidesingbound}
Suppose $L\in\mathcal{P}$ and $A\in \mathcal{D}$ is an optimal desingularizer of $L$. Then there exists a positive integer $N$ such that 
$$\denom(\chi(A))\mid (\mathfrak{rp}_\infty(L))^N.$$
\end{lemma}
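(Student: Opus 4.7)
\emph{Plan.} Write $A=\sum_{i=0}^{k}(n_i/d_i)\tau^i$ in lowest terms. The plan is two-step: first bound each denominator $d_j$ in terms of shifts of the removable parts $\mathfrak{rp}_i(L)$ for $i\ge j$, extending Lemma~\ref{lemma:lcdenombound}; then convert those bounds into a bound on $\denom(\chi(A))$ via a clear-denominators trick combined with Lemma~\ref{lemma:pcurvprop}(iii),(v).

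\emph{Step 1 (bound the $d_j$).} The proof of Lemma~\ref{lemma:lcdenombound} actually shows the stronger $F[x]$-divisibility $d_k\mid \tau^{k+n}(\mathfrak{rp}_\infty(L))$ with $n=\ord(L)$, which is the base case. For $j<k$ I would use a downward induction: the operator $A_j := d_kd_{k-1}\cdots d_{j+1}A$ differs from $A$ by a $\mathcal{P}$-multiple of $A$, so $A_jL\in\mathcal{P}$ and $A_j$ is itself a desingularizer. By construction the $\tau^i$-coefficient of $A_j$ is polynomial for $i>j$ and equals $(d_k\cdots d_{j+1}n_j)/d_j$ in lowest terms for $i=j$ (the $d_\ell$ with $\ell\ne j$ are coprime to $d_j$). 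Rerunning the leading-denominator argument of Lemma~\ref{lemma:lcdenombound} applied to $A_j$ at its now-leading fractional coefficient yields $d_j \mid \prod_{i=j}^{k}\tau^{i+n}(\mathfrak{rp}_i(L))$. Since $\mathfrak{rp}_i(L)\mid \mathfrak{rp}_\infty(L)$ for every $i$, each $d_j$ divides a product of shifts of $\mathfrak{rp}_\infty(L)$.

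\emph{Step 2 (bound $\denom(\chi(A))$).} Set $d := d_0d_1\cdots d_k\in F[x]$, so $dA\in\mathcal{P}$. As an operator, $d\in F(x)^\times$ is a unit in $\mathcal{D}$, so $\mathcal{D}/\mathcal{D}d = 0$ and hence $\chi(d)=1$ (empty determinant); by multiplicativity (Lemma~\ref{lemma:pcurvprop}(iii)), $\chi(dA)=\chi(A)$. Lemma~\ref{lemma:pcurvprop}(v) applied to $dA\in\mathcal{P}$ then gives $\mathcal{N}(\lcs(dA))\chi(A)\in F[Z][T]$, so $\denom(\chi(A))$ divides $\mathcal{N}(\lcs(dA)) = \mathcal{N}(n_k)\prod_{j<k}\mathcal{N}(d_j)$. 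Optimality of $A$ forces $n_k\in F^\times$: from $\lcs(AL)=\LCs_\infty(L)$ one reads $n_k/d_k = 1/\tau^{k+n}(\mathfrak{rp}_\infty(L))$, and coprimality of $n_k$ and $d_k$ forces $n_k$ to be a constant. Combining with Step~1, $\denom(\chi(A))$ divides a power of $\mathcal{N}(\mathfrak{rp}_\infty(L))$; an $N$ of order $k^2$ suffices, yielding the asserted divisibility (with the right-hand side interpreted in $F[Z]$ via the norm).

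\emph{Main obstacle.} The delicate step is the downward induction of Step~1: after multiplying through by $d_k\cdots d_{j+1}$ one must check that Lemma~\ref{lemma:lcdenombound}'s leading-denominator reasoning still delivers a usable constraint at the $\tau^j$ coefficient of $A_j$, and one must book-keep how shifts of the different $\mathfrak{rp}_i(L)$ stack up in the resulting product. Once the per-coefficient bound on $d_j$ is established, the rest is mechanical, using only multiplicativity of $\chi$, the triviality of $\chi$ on units in $\mathcal{D}$, and the polynomiality of $\tilde\chi$ on $\mathcal{P}$.
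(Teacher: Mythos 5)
Your proposal follows the same two-step structure as the paper's proof: clear denominators so $dA \in \mathcal{P}$, use $\chi(dA)=\chi(A)$ and Lemma~\ref{lemma:pcurvprop}(v) to reduce the problem to bounding the $\mathcal{N}(d_j)$, and then bound each $d_j$ via auxiliary desingularizers. That is the right approach, and Step~2 is essentially correct. But Step~1 has several real gaps.

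First, the parenthetical claim ``the $d_\ell$ with $\ell\ne j$ are coprime to $d_j$'' is unsupported and false in general: the $d_i$'s have no reason to be pairwise coprime. Without it, the $\tau^j$-coefficient $(d_k\cdots d_{j+1}n_j)/d_j$ is \emph{not} necessarily in lowest terms; its denominator in lowest terms is $d_j/\gcd(d_j,\,d_k\cdots d_{j+1})$, and Lemma~\ref{lemma:lcdenombound} then gives only $d_j\mid d_k\cdots d_{j+1}\cdot\tau^{j+n}(\mathfrak{rp}_\infty(L))$, not $d_j\mid \tau^{j+n}(\mathfrak{rp}_\infty(L))$. Consequently the downward induction produces an accumulating power of $\mathfrak{rp}_\infty(L)$ (roughly doubling at each step), not the single product $\prod_{i=j}^{k}\tau^{i+n}(\mathfrak{rp}_i(L))$ you assert; that product bound is not established by the argument you give. (It does not matter for the lemma, since any finite $N$ suffices, but you should state the bound the induction actually yields rather than a tidier-looking one it does not.) Second, your $A_j := d_k\cdots d_{j+1}A$ has order $k$, and its leading coefficient at $\tau^k$ is $d_{k-1}\cdots d_{j+1}n_k$, already a polynomial; Lemma~\ref{lemma:lcdenombound} applied to $A_j$ therefore says nothing about $d_j$. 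You need to pass to the truncation $\tilde A_j = d_k\cdots d_{j+1}\sum_{i=0}^{j}\frac{n_i}{d_i}\tau^i$, which is still a desingularizer because $A_j-\tilde A_j\in\mathcal{P}$, so that the $\tau^j$-coefficient becomes the leading one and the lemma applies. Third, your appeal to $\mathfrak{rp}_i(L)$ with $i<\infty$ would require a version of Lemma~\ref{lemma:lcdenombound} stated at finite order, which you do not prove; the paper uses $\mathfrak{rp}_\infty(L)$ throughout, which is all that is needed. Once these three points are repaired—use the truncated $\tilde A_j$, drop the coprimality assertion, and accept the weaker but sufficient bound $\mathcal{N}(d_j)\mid\mathcal{N}(d_k\cdots d_{j+1})\cdot\mathcal{N}(\mathfrak{rp}_\infty(L))$—your argument coincides with the paper's.
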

\begin{proof}

Write $A=\sum_{i=0}^k \frac{n_i}{d_i}\tau^i$, where $\frac{n_i}{d_i}\in F(x)$ is a reduced fraction for each $i$. We deduce $n_k=1,d_k=\tau^{n+k}(\mathfrak{rp}_\infty(L))$ from the fact that $\lcs(AL)=\lcs_\infty(L)$.
Clearly $d_0d_1\cdots d_kA\in\mathcal{P}$. %Lemma~\ref{lemma:desingularizerdenombound} implies $\mathcal{N}(d_i)\mid \mathcal{N}(\mathfrak{rp}_\infty(L))^{k-j}$.
Then by Equation~(\ref{denombound})			% \hyperref[lemma:pcurvprop]{Lemma~\ref*{lemma:pcurvprop}(v)}, %there exists a sufficiently large $N$ such that
%$$\denom(\chi(A))\mid \mathcal{N}(\lc(d_0d_1\cdots d_kA))$$
%$$\mbox{}\hspace{3cm}
%=\mathcal{N}(d_0d_1\cdots d_{k-1}n_k)\mid \mathcal{N}(\mathfrak{rp}_\infty(L))^N \mathcal{N}(n_k). $$
$$\denom(\chi(A))\mid \mathcal{N}(\lcs(d_0d_1\cdots d_kA))    =\mathcal{N}(d_0d_1\cdots d_{k-1}). $$
%$$\mbox{}\hspace{2.2cm} =\mathcal{N}(d_0d_1\cdots d_{k-1}). $$
Now we bound $\mathcal{N}(d_i)$ in terms of $\mathfrak{rp}_\infty(L)$. Let 
$$A_j=d_kd_{k-1}\cdots d_{j+1}(\sum_{i=0}^j \frac{n_i}{d_i}\tau^i)$$
for $j=0,1,\ldots,k-1$. Notice that
$$A_j-d_kd_{k-1}\cdots d_{j+1}A=-d_kd_{k-1}\cdots d_{j+1}(\sum_{i=j+1}^k \frac{n_i}{d_i}\tau^i)\in\mathcal{P}.$$
This implies $A_jL\in\mathcal{P}$, or equivalently, $A_j$ is a desingularizer of $L$.
Apply Lemma~\ref{lemma:lcdenombound} to $A_j$:
$$\mathcal{N}(\denom(d_kd_{k-1}\cdots d_{j+1}\frac{n_j}{d_j}))\mid \mathcal{N}(\mathfrak{rp}_\infty(L)).$$
Notice that
$$d_j\mid d_kd_{k-1}\cdots d_{j+1}\cdot \denom(d_kd_{k-1}\cdots d_{j+1}\frac{n_j}{d_j}).$$
Therefore
$$\mathcal{N}(d_j)\mid \mathcal{N}(d_kd_{k-1}\cdots d_{j+1})\cdot \mathcal{N}(\mathfrak{rp}_\infty(L)).$$
Recall that $\mathcal{N}(d_k) \mid \mathcal{N}(\mathfrak{rp}_\infty(L))$. By downward induction on $j$, we conclude that
$$\mathcal{N}(d_j)\mid (\mathcal{N}(\mathfrak{rp}_\infty(L)))^{k-j+1}.$$

\end{proof}

We are now ready to finish the proof for \autoref{lemma:denom_pcurv}. 
\begin{proof}[Proof of \autoref{lemma:denom_pcurv}]
It remains to show that
$ \mathcal{N}(\LCs_\infty(L))\mid \denom(\mathrm{\chi}(L))$ for any $L\in \mathcal{D}$.
There exists a sufficiently large $k$ such that $\LCs_k(L)=\lcs_\infty(L)$. Introduce new constants $c_0,\ldots, c_k$ that are algebraically independent over $F$ and denote $E=F(c_0,c_1,\ldots,c_k)$. Let
$$L'=\mathrm{Prim}(\mathrm{LCLM}(c_k\tau^k+\cdots +c_0,L))\in E[x][\tau].$$
\autoref{thm:lclmdesing} says
$\lcs(L')=\lcs_\infty(L)f,$
where $f\in E[x]$ has no non-trivial factor in $F[x]$. It follows from Definition~\ref{def:essentialpart} (see also Equation~\ref{eq:lcideal}) that $\lcs_\infty(L)\mid \lcs_\infty(L')$ and hence $\mathfrak{rp}_\infty(L')\mid f$.
Remark~\ref{rem:lcfieldext} guarantees $\lcs_\infty(L)$ does not change as we shift from $F$ to $E$.
Let $A$ be an optimal desingularizer of $L'$. 
By Lemma~\ref{lemma:pcurvprop} ((iii) and (iv)), we have
\begin{equation}\label{eq:chiAL'}
    \chi(AL')=\chi(A)\chi(c_k\tau^n+\cdots+c_0)\chi(L).
\end{equation}
% Since $c_k$ is a unit in $E[x]$,
\hyperref[lemma:pcurvprop]{Lemma~\ref*{lemma:pcurvprop}(vi)} implies $$\denom(\chi(c_k\tau^n+\cdots+c_0))=1.$$
Applying Lemma~\ref{lemma:easy} to Equation~(\ref{eq:chiAL'}) gives
$$\denom(\chi(AL'))=\denom(\chi(A))\denom(\chi(L)).$$
% (For monic polynomials, $\denom$ is the reciprocal of $\mathrm{Cont}$, which is multiplicative by Gauss's lemma.)
% We remind readers that $\denom$ is distributive here because $\chi(\cdot)$ is monic.
%
%By Lemma~\ref{lemma:chidesingbound}, $\denom(\chi(A))\mid f^N$ for a sufficiently large integer $N$. By 
Since $AL'$ is Gaussian, we know from Lemma~\ref{lemma:thmholdsforgaussian} 
$$\denom(\chi(AL'))=\mathcal{N}(\lcs_\infty(AL')),$$
which equals $\mathcal{N}(\lcs_\infty(L'))$ since $A$ is an optimal desingularizer of $L'$.
As a consequence,
%\begin{equation}\label{eq:denompcurv}
%\mathcal{N}(\lcs_\infty(L))\mid \mathcal{N}(\lcs_\infty(L'))=\denom(\chi(AL'))\mid \mathcal{N}(\lcs(L'))=\mathcal{N}(\lcs_\infty(L))\mathcal{N}(f).
%\end{equation}
%Put \autoref{eq:chiAL'} and \ref{eq:denompcurv} together:
\begin{equation}\label{eq:divisioninE}
\mathcal{N}(\lcs_\infty(L))\mid \mathcal{N}(\lcs_\infty(L'))=\denom(\chi(A))\denom(\chi(L)).
\end{equation}
Lemma~\ref{lemma:chidesingbound} says $\denom(\chi(A))$ is a factor of $f^N$ for some sufficiently large $N$, so $\denom(\chi(A))$ has no non-trivial factor in $F[x]$. 
By taking only factors in $F[x]$ in Equation~\ref{eq:divisioninE}, we obtain the desired result
$\mathcal{N}(\lcs_\infty(L)) \mid \denom(\chi(L)).$

\end{proof}

{%\color{blue}
%\begin{lemma}
%Let $\alpha=\frac{\lc(L)}{\LC(L)}$. Claim that
%$\mathcal(\alpha)^{-1} \tilde{\chi}(L)\in\Fp[Z][T]$.
%\end{lemma}\todo{Same lemma in a different form. The formulation is closer to how it is used in the algorithm.}
%\begin{conjecture}
%$\mathcal{N}(\alpha)^{-1}\tilde{\chi}(L)$ is primitive.
%\end{conjecture}}
{%\color{blue}
\section{Application to Computations}
In this section $F=\Fp$.
\subsection{Algorithm}

Let $L\in \mathcal{P}$ and $\alpha=\mathfrak{rp}_k(L)$.
\autoref{lemma:denom_pcurv} implies that $\mathcal{N}(\alpha)$, which is in $\Fp[Z]$, is a factor of $\Tilde{\chi}(L)$.
Dividing this factor away reduces the degree bound from \hyperref[lemma:pcurvprop]{Lemma~\ref*{lemma:pcurvprop}(v)} to
\begin{equation}\label{eq:pcurvdegbound}
    \deg_{Z}(\,\mathcal{N}(\alpha)^{-1}\tilde{\chi}(L)\,)\leq \deg_x(L)-\deg_x(\alpha)
\end{equation}
which becomes an equality when $k$ is sufficiently large. However, we use $k=1$ to minimize the time spent computing $\alpha$.
The reduced degree bound % from knowing an a priori factor $\mathcal{N}(\alpha)$ of $\tilde{\chi}(L)$
allows us to recover $\chi(L)$ from a lower precision $Z$-adic expansion.
That leads to the following algorithm.
%Inequality~\ref{eq:pcurvdegbound} infers that instead of computing $\tilde{\chi}_p(L)$ up to the precision $O(Z^{\deg_x(L)+1})$, it suffices to compute $\mathcal{N}(\alpha)^{-1}\tilde{\chi}_p(L)$ up to $O(Z^{\mathrm{prec}})$, where $\mathrm{prec}=\deg_x(L)-\deg_x(\alpha)+1$.
%Suppose $\mathcal{N}(\alpha)=Z^v\beta$, where $v\in\mathbb{N}$ and $Z\nmid \beta$. Then $\beta$ is invertible in $\Fp[[Z]]$ and
%$$\mathcal{N}(\alpha)^{-1}\tilde{\chi}_p(L)=(\beta^{-1})(Z^{-v}\tilde{\chi}_p(L)).$$
%Therefore it suffices to compute $\beta^{-1}$ up to $O(Z^{\mathrm{prec}})$ and $\tilde{\chi}_p(L)$ up to $O(Z^{\mathrm{prec}+v})$. 
%Considering the cost-benefit balance {\color{blue}(\autoref{rem:balance})}, setting $\alpha=\frac{\lc(L)}{\LC_1(L)}$ is a good choice in practice.
%We then have the following algorithm.
}

\begin{algorithm}
\SetKwInOut{Input}{Input}\SetKwInOut{Output}{Output}
\caption{\texttt{Xi\_p\_desing}}\label{alg:xipdesing}
\Input{prime $p$ and $L\in \Fp[x][\tau]$}
\Output{$\mathrm{Prim}(\chi(L))\in \Fp[Z][T]$}
\BlankLine

     Pick $k \geq 1$ and compute $\LCs_k(L)$ and $\alpha:=\mathfrak{rp}_k(L)\in \Fp[x]$. We use $k=1$ to minimize the time spent in this step.
     
    Compute $\mathcal{N}(\alpha)\in\Fp[Z]$. Let $v$ be its $Z$-adic valuation in $\Fp[[Z]]$ and let $\beta=Z^{-v}\mathcal{N}(\alpha)\in \Fp[Z]$.\newline
    For computing $\mathcal{N}(\cdot)$ see Step 3 of \texttt{Xi\_theta\_d} in \cite{Bostan:2014:FAC:2608628.2608650}.
    
     Let $d_1=\deg_Z(\beta)$ and $d:=\deg_x(L)$. %Compute \texttt{Xi\_theta\_d}$(L)$ up to the precision $O(Z^{d-d_1+1})$. Denote the result by $\chi_1$. Then $\tilde{\chi}(L)=\chi_1+O(Z^{d-d_1+1})$.\newline
    Apply the BCS algorithm with $d$ replaced by $d-d_1$ to $L$ to obtain $\tilde{\chi}(L)$ up to the precision $O(Z^{d-d_1+1})$. Denote the result by $\chi_1$.
    %{Remark: {Changing the precision of the BCS algorithm can be done easily by replacing $d$ by $d-d_1$ in the original algorithm.}}
    
     Compute $\beta^{-1}$ in $\Fp[[Z]]$ up to $O(Z^{d-d_1-v+1})$.\newline
    {{This can be done by applying the extended Euclidean algorithm to $\beta$ and $Z^{d-d_1-v+1}$.}}
    
     Compute  $\beta^{-1}\cdot (Z^{-v}\chi_1)$ in $\Fp[[Z]][T]$ up to the precision $O(Z^{d-d_1-v+1})$. This gives
    $\mathcal{N}(\alpha)^{-1}\chi_1\in\Fp[Z][T]$. \newline Return its primitive part (with respect to $T$).\newline
    Note: $\mathcal{N}(\alpha)^{-1}\chi_1$ and $\mathcal{N}(\alpha)^{-1}\tilde{\chi}(L)$ agree to precision $O(Z^{d-d_1-v+1})$ which suffices % since $d-d_1-v$ is a $Z$-degree bound for $\mathcal{N}(\alpha)^{-1}\tilde{\chi}(L)$ (Inequality
    by~(\ref{eq:pcurvdegbound}).

\end{algorithm}
%\begin{remark}
Step 3 is where we save CPU time over the original algorithm from \cite{Bostan:2014:FAC:2608628.2608650} if $d_1>0$. If $d_1=0$ then there is no improvement in efficiency. However, as we will see in the following section, the extra steps cost very little time.
%\end{remark}

\iffalse
\todo[inline]{The above doesn't explain in details how to compute $\beta^{-1}\chi_1$. Not sure which is better.}

\begin{itemize}
    \item[1] Compute $\LC_1(L)$ and $\alpha:=\frac{\lc(L)}{\LC_1(L)}\in \Fp[x]$.
    \item[2] Compute $\beta:=\mathcal{N}(\alpha)\in\Fp[Z]$.\newline
    Remark: See Step 3 of \texttt{Xi\_theta\_d} in \cite{Bostan:2014:FAC:2608628.2608650} regarding how to compute the factorial of a polynomial.
    \item[3] Let $v_1$ be the valuation of $\beta$ in $\Fp[[Z]]$ and $d_1=\deg_Z(\beta)$. Denote $\beta^*:=Z^{-v_1}\beta$. Then $\beta^*$ is invertible in $\Fp[[Z]]$. 
    \item[4] Compute $(\beta^*)^{-1}$ up to precision $O(Z^{d-v_1+1})$.\newline
    Remark: This step can be done by applying extended Euclidean algorithm to $\beta^*$ and $Z^{d-v_1+1}$.
    \item[5] Set $d:=\deg_x(L)$ and compute \texttt{Xi\_p}$(L)$ up to the precision $O(Z^{d-d_1+v_1+1})$. Denote the result by $\chi_1$. 
    \item[6] Compute $\beta^{-1}\chi_1=(\beta^*)^{-1}(Z^{-v_1}\chi_1)$ in $\Fp[[Z]][T]$ up to the precision $O(Z^{d-d_1+1})$ and return.\newline
    Remark: \autoref{lemma:denom_pcurv} implies $Z^{-v_1}\chi_1\in\Fp[[Z]]$ and it is known at precision $O(Z^d-d_1+1)$.
\end{itemize}
\fi

\subsection{Implementation and timings}\label{sec:pcurv_timing}

{
Our Magma implementation of algorithm \texttt{Xi\_p\_desing} is available at \url{https://www.math.fsu.edu/~yzhou/magma/}, together with experiments on a variety of operators.
One should load the implementation of \cite{Bostan:2014:FAC:2608628.2608650} at \url{https://github.com/schost/pCurvature} (file \texttt{pCurvature.mgm}) prior to ours. 
%We experiment our implementation on a variety of operators and compare timings with the implementation of BCS algorithm. %The machine used for the test is \todo{add info about the computer, or maybe just delete it if not necessary}. 
%The complete test can be found on our website. 

In the following table the data for two operators from OEIS (\cite{A002777}, \cite{A151329}) is presented. 
%Operators that we used for the test can be classified into four categories: the order 46 operator (\todo{cite}), operators from OEIS, operators constructed by taking LCLM, random operators.
Here $d_1$ is defined in the Step 3 of \texttt{Xi\_p\_desing}; each running time is the average of ten runs.%; $t_1$ and $t_2$ are running times of original BCS algorithm and the new one with desingularization, respectively.

\begin{table}[!h]
    \centering
%\begin{tabular}{ c|p{1.5cm}|p{1.5cm}|p{1.5cm}|p{2cm}|p{1.5cm}|p{2.3cm}|  }
\begin{tabular}{ |c|c|c|c|c|c|  }
 \hline
 OEIS index & order & $x$-degree & $d_1$ & BCS & \texttt{Xi\_p\_desing} \\
 \hline
 A151329 & 9 & 18 & 10 & 17.2s & 9.6s \\
 \hline
 A002777 & 4 & 3 & 0 & 6.97s & 7.01s \\
 \hline 
 %%165544 & 4 & 5 & 0.7s & 2 & 0.84s \\
 %%\hline 
 %%002777 & 4 & 4 & 0.58s & 2 & 0.47s \\
\end{tabular}
\caption{Timings for operators from OEIS. $p=27457$.}
\label{table:oeis}
\end{table}
For the recurrence for OEIS A002777, we expect $\texttt{Xi\_p\_desing}$ to be slower than BCS since $d_1=0$. However, the running time difference between two algorithms is nearly  unnoticeable. 

We also tested our algorithm on operators that are LCLMs of two operators~\cite{implementations}.
 %Operators in the form $\mathrm{LCLM}(L_1,L_2)$
Such operators tend to have many apparent singularities and hence benefit more from our approach.%; such operators benefit the most from our approach. On the other side, random operators are likely to have no apparent singularity at all; for these operators our algorithm is 

\section{Fast Algorithms for Desingularization at Order 1}\label{sec:desing1}

\subsection{First algorithm}\label{sec:LC1}
In this section we present our first speedup of the order-1 LCLM method. We used it for Step 1 of \autoref{alg:xipdesing}.

The order-1 LCLM method computes $L'=\mathrm{LCLM}(L,\tau-c)$ where $c$ is a new constant (or a random number in the Monte-Carlo version).
To speed this up, our idea is to obtain $\LCs_1(L)$ while only computing a portion of $L'$. %? In the following an affirmative answer is given.

First we express of $L'$ in terms of $c$ and coefficients of $L$.
Suppose $L=\sum_{i=0}^n a_i\tau^i\in F(x)[\tau]$. Then
%We start by computing $\mathrm{LCLM}(L,\tau-c)$ for $L=\sum_{i=0}^n a_i\tau^i$ and $c\in F$. Let
$L'=\sum_{i=0}^n c^i L_i,$
where
\begin{equation}\label{eq:Li}
    L_{i}=a_{i}\tau L-\tau(a_{i-1}) L= (a_i\tau-\tau(a_{i-1}))L,
\end{equation}
where $a_i=0$ for $i<0$ and $i>n$. Clearly this $L'$ is a left multiple of $L$. To verify it is also a left-multiple of $\tau-c$, use the fact that the remainder of $\tau^i$ right-divided by $\tau-c$ is $c^i$. We skip the tedious computation. As a result $L'$ is an LCLM of $L$ and $\tau-c$.
%noticing $\tau^i-c^i$ is a left multiple of $\tau-c$, replace $\tau^i$ by $c^i$ in $L'$ and check the obtained expression is equal to 0. 

The order-1 LCLM method computes $L'$ which amounts to compute all $L_i$'s. The following proposition shows that one can provably obtain $\LCs_1(L)$
from just a subset of the $L_i$'s.

\begin{proposition}\label{prop:LC1} Let $L=\sum_{i=0}^n a_i\tau^i\in F[x][\tau]$. 
Let $L_i$ be defined by \autoref{eq:Li}, where $a_i=0$ for $i<0$ and $i>n$.
If
\begin{equation}\label{eq:gcdcondition}
    \gcd(a_{i_1},a_{i_2},\ldots, a_{i_k})=1,
\end{equation}
 then
$$\gcd(\LCs_0(L_{i_1}),\LCs_0(L_{i_2}),\ldots, \LCs_0(L_{i_k}))=\LCs_1(L).$$
%Then
%$\LCs_0(L_i)\mid \tau^{-n-1}(a_i)\LCs_1(L).$
\iffalse\begin{itemize}
    \item[1] $$\LC_0(L_i)\mid a_i\tau(\LC_1(L));$$
    \item[2] $$\TC_0(L_{i+1})\mid \tau(a_i)\TC_1(L).$$
\end{itemize}\fi
\end{proposition}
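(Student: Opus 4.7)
My plan is to prove the two divisibilities
$$\LCs_1(L)\,\Big|\,\gcd_j \LCs_0(L_{i_j}) \quad \text{and} \quad \gcd_j \LCs_0(L_{i_j})\,\Big|\,\LCs_1(L).$$
Observe first that the hypothesis $\gcd_j a_{i_j}=1$ forces $L$ to be primitive. The easy direction follows immediately: $A_i := a_i\tau-\tau(a_{i-1})\in \mathcal{P}$ has order $\leq 1$, so $A_i/\mathrm{Cont}(L_i) \in \mathcal{D}$ is an order-$\leq 1$ operator with $\mathrm{Prim}(L_i) \in \mathcal{P}$. Hence each $\LCs_0(L_{i_j})=\lcs(\mathrm{Prim}(L_{i_j}))$ lies in $\mathcal{I}_1$ and is divisible by $\LCs_1(L)$.

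For the reverse direction I set $g := \lcs(L)/\LCs_1(L) \in F[x]$ and reduce to the following lemma:
$$\tau^{n+1}(g)\,\Big|\,\mathrm{Cont}(L_i)\quad\text{for every }i.\qquad (\star)$$
Granting $(\star)$, the identities $\lcs(L_i)=\lcs(L)\tau^{-(n+1)}(a_i)$ (from $L_i = A_i L$) and $\LCs_0(L_i)=\lcs(L_i)/\tau^{-(n+1)}(\mathrm{Cont}(L_i))$ combine after a short rearrangement to give $\LCs_0(L_i) \mid \tau^{-(n+1)}(a_i)\LCs_1(L)$. Taking $\gcd$ over $j$ and invoking $\gcd_j \tau^{-(n+1)}(a_{i_j}) = \tau^{-(n+1)}(\gcd_j a_{i_j}) = 1$ then yields $\gcd_j \LCs_0(L_{i_j})\mid \LCs_1(L)$.

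To prove $(\star)$, I pick an optimal order-$\leq 1$ desingularizer $A^*\in\mathcal{D}$ and write it in lowest terms as $A^*=(a^*+b^*\tau)/d^*$ with $\gcd(a^*,b^*,d^*)=1$. Matching the leading coefficient in $\lcs(A^* L)=\LCs_1(L)$ forces $d^* = b^*\tau^{n+1}(g)$, from which $\gcd(a^*,b^*)=1$. The relation $A^* L \in \mathcal{P}$ reads $d^* \mid a^* a_k+b^*\tau(a_{k-1})$ for every $k$; combined with $\gcd(a^*,b^*)=1$ this yields $b^* \mid a_k$ for each $k$, and primitivity of $L$ then forces $b^*\in F^\times$. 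After normalization I may take $b^*=1$, which produces the congruence
$$\tau(a_{k-1})\equiv -a^*a_k \pmod{\tau^{n+1}(g)}\quad\text{for all }k.$$
Substituting into the $k$-th coefficient of $L_i$, namely $v_k=a_i\tau(a_{k-1})-\tau(a_{i-1})a_k$, gives by direct cancellation $v_k\equiv -a^*a_ia_k+a^*a_ia_k=0 \pmod{\tau^{n+1}(g)}$, so $\tau^{n+1}(g)$ divides $\mathrm{Cont}(L_i)=\gcd_k v_k$, proving $(\star)$.

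The main obstacle is the structural analysis of $A^*$: deducing $d^*=b^*\tau^{n+1}(g)$ from the leading-coefficient match and then $b^*=1$ from primitivity. Once these facts are pinned down, the closing congruence computation is routine.
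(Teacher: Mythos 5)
Your proof is correct and follows essentially the same route as the paper's. Both proofs isolate the same structural fact — that an optimal order-$1$ desingularizer can be written (after clearing denominators, using primitivity of $L$) as $(\tau - b)/\tau^{n+1}(\mathfrak{rp}_1(L))$ with $b\in F[x]$, your $b = -a^*$ — and then use the resulting congruence $\tau(a_{k-1})\equiv b\,a_k \pmod{\tau^{n+1}(\mathfrak{rp}_1(L))}$ to conclude $\tau^{n+1}(\mathfrak{rp}_1(L)) \mid \mathrm{Cont}(L_i)$; the paper packages the first step as Lemma~\ref{lemma:desing_poly} and the second inside the slightly more general Theorem~\ref{thm:LC1}, whereas you specialize directly to $L_i$.
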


\iffalse
\begin{proposition}\label{prop:LC1} Let $L=\sum_{i=0}^n a_i\tau^i\in F[x][\tau]$ be primitive. Due to the primitivity of $L$, there exists $I=\{i_1,i_2,\ldots, i_m\}\subseteq \{0,1,\ldots,n\}$ such that $a_i\neq 0$ for $i\in I$ and
$$\gcd(a_{i_1},a_{i_2},\cdots, a_{i_m})=1.$$
Let $L_i$ be defined by \autoref{eq:Li}, where we assume $a_i=0$ for $i<0$ and $i>n$. Claim that
\begin{itemize}
    \item[1] $$\gcd(\LC_0(L_i):i\in I)=\tau(\LC_1(L));$$
    \item[2] $$\gcd(\mathrm{TC}_0(L_{i+1}):i\in I)=\mathrm{TC}_1(L).$$
\end{itemize}
\end{proposition}
\fi

The proof will be given in the next section. Note that there exist $i_1,i_2,\ldots, i_k$ satisfying the gcd condition (\autoref{eq:gcdcondition}) if and only if $L$ is primitive. The proposition immediately implies \autoref{alg:LC1}.

\begin{algorithm}
\SetKwInOut{Input}{Input}\SetKwInOut{Output}{Output}
\caption{$\LCs_1$}\label{alg:LC1}
\Input{a primitive operator $L=\sum_{i=0}^n a_i \tau^i\in F[x][\tau]$}
\Output{$\LCs_1(L)$}
\BlankLine

 Find $I\subset \{0,1,\ldots, n\}$ such that $a_{i}\neq 0$ for any $i\in I$ and $\gcd(a_i\mid i\in I)=1$. % 
 Note: the algorithm is still correct if we allow $a_i=0$, but that $i$ is redundant since it does not affect the gcd at all.
 
 Compute $L_i$ for $i\in I$ by \autoref{eq:Li}.
 
 Return $\gcd(\LCs_0(L_i): i\in I)$.

\end{algorithm}

 {\iffalse
\begin{algorithm}[H]
\SetKwInOut{Input}{Input}\SetKwInOut{Output}{Output}
\caption{$\LC_1$}\label{alg:LC1}
\Input{a primitive operator $L=\sum_{i=0}^n a_i \tau^i\in F[x][\tau]$}
\Output{$\LC_1(L)$}
\BlankLine

$a,l\leftarrow \tau(a_n),\tau(a_n)$;

$I\leftarrow \{0,1,\ldots, n\}$;

\While{$a\neq 1$}{
Choose $i\in I$ randomly;

$I\leftarrow I-\{i\}$;

    \If{$a_i\neq 0$}{
        $Li\leftarrow (a_i\tau-\tau(a_{i-1}))L$;
        
        $l\leftarrow \gcd(l,\LC_0(Li))$;

        $a\leftarrow \gcd(a,a_i)$;
    }
}
\Return{$\tau^{-1}(l)$};
\end{algorithm}
\fi

\iffalse\begin{algorithm}[H]
\SetKwInOut{Input}{Input}\SetKwInOut{Output}{Output}
\caption{$\mathrm{TC}_1$}\label{alg:TC1}
\Input{a primitive operator $L=\sum_{i=0}^n a_i \tau^i\in F[x][\tau]$}
\Output{$\TC_1(L)$}
\BlankLine

Find $I\subset \{0,1,\ldots, n\}$ such that $a_{i}\neq 0$ for any $i\in I$ and $\gcd(a_i\mid i\in I)=1$.

Compute $L_{i+1}=(a_{i+1}\tau-\tau(a_{i}))L$ for $i\in I$.

Compute $t=\gcd(\mathrm{TC}_0(L_{i+1}): i\in I)$.

return $t$.

\end{algorithm}\fi
}

\begin{remark}
 Computing $\LCs_0(L_i)=\lcs(\mathrm{Prim}(L_i))$ %or $\mathrm{TC}_0(L_i)$ 
 is the most time-consuming part in the algorithm, because $L_i$ has twice the $x$-degree as $L$.
\end{remark}

\iffalse
{\color{blue}
The following is the pseudocode version of \autoref{alg:LC1}.\todo{delete entirely or keep for thesis}

\begin{algorithm}[H]
\SetKwInOut{Input}{Input}\SetKwInOut{Output}{Output}
\caption{$\LC_1$}\label{alg:LC1pseudo}
\Input{a primitive operator $L=\sum_{i=0}^n a_i \tau^i\in F[x][\tau]$}
\Output{$\LC_1(L)$}
\BlankLine

$a,l\leftarrow \tau(a_n),\tau(a_n)$;

$I\leftarrow \{0,1,\ldots, n\}$;

\While{$a\neq 1$}{
Choose $i\in I$ randomly;

$I\leftarrow I-\{i\}$;

    \If{$a_i\neq 0$}{
        $Li\leftarrow (a_i\tau-\tau(a_{i-1}))L$;
        
        $l\leftarrow \gcd(l,\LC_0(Li))$;

        $a\leftarrow \gcd(a,a_i)$;
    }
}
\Return{$\tau^{-1}(l)$};
\end{algorithm}
}
\fi

\subsection{Proof}
    %In this section part 1 of \autoref{prop:LC1} is fully proved and the details of the proof for part 2 are omitted (the reason is explained in \autoref{rem:LCTC}). 
    Always assume $L=\sum_{i=0}^n a_i\tau^i\in F[x][\tau]$ is primitive and $L_i$ is defined by \autoref{eq:Li}.
\begin{lemma}\label{lemma:desing_poly}
There exists $b\in F[x]$ such that 
$$\mathrm{Cont}((\tau-b)L)=\tau^{n+1}(\mathfrak{rp}_1(L)).$$
\iffalse\begin{itemize}
    \item[1] There exists $b\in F[x]$ such that 
$$\LC_0((\tau-b)L)=\tau(\LC_1(L)).$$
    \item[2] There exists $b'\in F[x]$ such that
    $$\LC_0((b'\tau-1)L)=\mathrm{TC}_1(L).$$
\end{itemize}\fi

\end{lemma}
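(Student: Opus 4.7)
The plan is to construct $b$ from an optimal order-one desingularizer of $L$. Write $q := \mathfrak{rp}_1(L)$ throughout. If $q = 1$ then $b = 0$ already works: the coefficients of $\tau L$ are $\tau(a_0),\ldots,\tau(a_n)$, whose gcd is $\tau(\gcd(a_0,\ldots,a_n)) = 1$ by primitivity of $L$. So assume $q \neq 1$ from here on.

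By Definition~\ref{def:essentialpart}, $\lcs_1(L)$ is the monic generator of $\mathcal{I}_1$ and so lies in $\mathcal{I}_1$. Hence there exists $A^\ast = \alpha \tau + \gamma \in \mathcal{D}$ of order exactly $1$ with $A^\ast L \in \mathcal{P}$ and $\lcs(A^\ast L) = \lcs_1(L)$. Expanding $A^\ast L$ shows its leading term is $\alpha\tau(a_n)\tau^{n+1}$, so after monicizing one gets $\lcs(A^\ast L) = \tilde\alpha\,\lcs(L)$ where $\tilde\alpha$ is an appropriate shift of the monic part of $\alpha$. Matching against the monic $\lcs_1(L) = \lcs(L)/q$ forces $\alpha = c/\tau^{n+1}(q)$ for some $c \in F^\times$; after rescaling $A^\ast$ by $1/c$ we may take $\alpha = 1/\tau^{n+1}(q)$. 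Setting $b := -\tau^{n+1}(q)\,\gamma \in F(x)$, we rewrite $A^\ast = \frac{1}{\tau^{n+1}(q)}(\tau - b)$, so that $(\tau-b)L = \tau^{n+1}(q)\,A^\ast L \in \mathcal{P}$.

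Next I show $b \in F[x]$. Write $b = p/s$ in lowest terms. Since each coefficient $\tau(a_{j-1}) - b\,a_j$ of $(\tau-b)L$ is a polynomial, $s \mid p\,a_j$ for every $j$, and $\gcd(p,s)=1$ gives $s \mid a_j$ for $j = 0,\ldots,n$. Primitivity of $L$ then forces $s$ to be a unit, so $b \in F[x]$.

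Finally I compute the content. By Gauss's lemma for $F[x]$-scalar multiplication, $\mathrm{Cont}((\tau-b)L) = \tau^{n+1}(q)\cdot \mathrm{Cont}(A^\ast L)$ up to a unit, so the remaining task is to show that $A^\ast L$ is primitive. If its content had positive $x$-degree, then $B := A^\ast/\mathrm{Cont}(A^\ast L)$ would still have order $1$, satisfy $BL \in \mathcal{P}$, and yield $\lcs(BL) = \lcs_1(L)/\tau^{-(n+1)}(\mathrm{Cont}(A^\ast L))$, which would be an element of $\mathcal{I}_1$ properly dividing its monic generator $\lcs_1(L)$, a contradiction. Therefore $\mathrm{Cont}((\tau-b)L) = \tau^{n+1}(q) = \tau^{n+1}(\mathfrak{rp}_1(L))$, up to a unit. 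I expect the only subtle step to be this last primitivity argument; everything else is a bookkeeping computation with the leading term and a denominator-clearing argument using primitivity of $L$.
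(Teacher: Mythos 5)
Your proof is correct and follows essentially the same route as the paper: take an optimal desingularizer $A^\ast$ at order one, note its leading coefficient is forced to be $1/\tau^{n+1}(\mathfrak{rp}_1(L))$, clear that denominator to obtain $\tau-b$, deduce $b\in F[x]$ from primitivity of $L$, and deduce primitivity of $A^\ast L$ from optimality. The only difference is that you separate out the case $\mathfrak{rp}_1(L)=1$ (where $b=0$ trivially works) and, for $\mathfrak{rp}_1(L)\neq 1$, explicitly observe that $A^\ast$ must have order exactly $1$; the paper glosses over this by implicitly assuming its optimal desingularizer has order one, which is harmless since one can always replace an order-zero desingularizer by $\tau$ times it without changing $\lcs(AL)$. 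This is a cosmetic tightening, not a different argument.
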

\begin{proof}
Let $A\in\mathcal{D}$ be an optimal desingularizer of $L$ at order $1$. Then $A=\frac{1}{d_1}\tau-\frac{n_2}{d_2}$, where $d_1=\tau^{n+1}(\mathfrak{rp}_1(L))$ and $\frac{n_2}{d_2}\in F(x)$ is a reduced fraction. Let $b=d_1\frac{n_2}{d_2}$. Observe that
\begin{equation}\label{eq:lc1b1}
    bL=\tau\cdot L- d_1AL\in\mathcal{P}.
\end{equation}
    
Due to $L$ being primitive, $b$ has to be a polynomial.
Since $A$ is an optimal desingularizer of $L$ at order 1, $AL\in\mathcal{P}$ is primitive; otherwise dividing out the content of $AL$ yields a more optimal desingularizer. By rearranging Equation~\ref{eq:lc1b1} we see that $\frac{1}{d_1}(\tau-b)L=AL$ is primitive,
which completes the proof. 
\end{proof}

\iffalse\begin{proof}
Suppose $b_1\tau-b_2\in \mathrm{Des}$ such that $\lc((b_1\tau-b_2)L)=\tau(\LC_1(L))$. The existence of such an operator is guaranteed by the fact that $F[x]$ is a PID. % Namely $(b_1\tau-b_2)L\in F[x][\tau]$ and its leading coefficient is $\tau(\LC_1(L))$. 
Then
$$\frac{1}{b_1}=\tau(\frac{\lc(L)}{\LC_1(L)})\in F[x].$$
Therefore,
$$\frac{b_2}{b_1}L=\tau L-\frac{1}{b_1}(b_1\tau-b_2)L\in F[x][\tau].$$
Because of the primitiviness of $L$, $\frac{b_2}{b_1}\in F[x]$. It is easy to check that
$\LC_0((\tau-\frac{b_2}{b_1})L)=\tau(\LC_1(L))$.
\end{proof} \fi

\iffalse
\begin{lemma}
Let $b,b'$ be the same as in Lemma~\ref{lemma:desing_poly}. Claim that
\begin{itemize}
    \item[1] $$\gcd(\tau(a_{j-1})-ba_{j}\mid j=0,1,\ldots,n+1)=\tau(\frac{\lc(L)}{\LC_1(L)});$$
    \item[2] $$\gcd(b'\tau(a_{j-1})-a_{j}\mid j=0,1,\ldots,n+1)=\frac{\tc(L)}{\TC_1(L)}.$$
\end{itemize}
\end{lemma}
\begin{proof}
\end{proof}
\fi

\begin{theorem}\label{thm:LC1}
 Let $C=(c_0,c_1,\ldots, c_{n+1})\in F^{n+2}$. Denote
%$C_1=%\sum_{i=0}^{n} c_i a_i=
%\sum_{i=0}^{n+1} c_i a_i$, $C_0=\sum_{i=0}^{n+1} c_i \tau(a_{i-1})$ and $L'=(C_1\tau-C_0)L$.
$$C_1=%\sum_{i=0}^{n} c_i a_i=
\sum_{i=0}^{n+1} c_i a_i, \quad C_0=\sum_{i=0}^{n+1} c_i \tau(a_{i-1}), \quad L'=(C_1\tau-C_0)L.$$
Then
    $$\LCs_1(L)\mid \LCs_0(L')\mid \tau^{-n-1}(C_1)\LCs_1(L).$$
\iffalse\begin{itemize}
    \item[1] If $C_1\neq 0$, then
    $$\LC_0(L')\mid C_1\tau(\LC_1(L)).$$
    \item[2] If $C_0\neq 0$, then
    $$\TC_0(L')\mid C_0\TC_1(L).$$
\end{itemize}\fi
%Claim that:
%\begin{itemize}
%    \item[1] If 
%$$\gcd(a_n, C_1)=1$$%\sum_{i=0}^{n} r_i a_i)=1$$
% then
%$$\gcd(a_n,\tau^{-1}(\LC_0(L')))=\LC_1(L);$$
%    \item[2] if 
%    $$\gcd(a_0,C_0)=1$$
%    then
%    $$\gcd(a_0,\mathrm{TC}_0(L'))=\mathrm{TC}_1(L).$$
%\end{itemize}
\end{theorem}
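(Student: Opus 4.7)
The plan is to establish the two divisibilities separately. First I would dispense with the degenerate case $C_1=0$: the right-hand side $\tau^{-n-1}(C_1)\LCs_1(L)$ vanishes, so only the left divisibility needs proof, and there $L'=-C_0 L$ gives $\LCs_0(L')=\LCs_0(L)=\lcs(L)$, which is divisible by $\LCs_1(L)$ by Definition~\ref{def:essentialpart}. So henceforth assume $C_1\neq 0$, in which case $L'\neq 0$ and $\mathrm{Cont}(L')\in F[x]\setminus\{0\}$.

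For the left divisibility $\LCs_1(L)\mid \LCs_0(L')$, set $M:=\mathrm{Prim}(L')=\mathrm{Cont}(L')^{-1}L'$ and $B:=\mathrm{Cont}(L')^{-1}(C_1\tau-C_0)\in\mathcal{D}$. Then $BL=M\in\mathcal{P}$; moreover the leading $\tau$-coefficient of $B$ equals $\mathrm{Cont}(L')^{-1}C_1\neq 0$ in $F(x)$, so $\ord(B)=1$. Consequently $\LCs_0(L')=\lcs(M)=\lcs(BL)\in\mathcal{I}_1$, hence $\LCs_1(L)\mid\LCs_0(L')$.

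For the right divisibility, the $\tau^{n+1}$-coefficient of $L'$ equals $C_1\tau(a_n)$, so shifting by $\tau^{-n-1}$ and unwinding the monic normalisation of $\lcs$ gives (up to units in $F$)
$$\LCs_0(L')=\frac{\tau^{-n-1}(C_1)\,\lcs(L)}{\tau^{-n-1}(\mathrm{Cont}(L'))},\qquad \tau^{-n-1}(C_1)\LCs_1(L)=\frac{\tau^{-n-1}(C_1)\,\lcs(L)}{\mathfrak{rp}_1(L)}.$$
Thus the claim reduces to $e\mid \mathrm{Cont}(L')$, where $e:=\tau^{n+1}(\mathfrak{rp}_1(L))$. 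Lemma~\ref{lemma:desing_poly} furnishes $b\in F[x]$ with $\mathrm{Cont}((\tau-b)L)=e$; in particular $e\mid\tau(a_{k-1})-b\,a_k$ for every $k$. Summing these divisibilities against $c_k$ yields $e\mid C_0-bC_1$, equivalently $e\mid C_1 b-C_0$. Now decompose
$$L' \;=\; C_1(\tau-b)L + (C_1 b-C_0)L,$$
in which each coefficient of the first summand is divisible by $e$ (since those of $(\tau-b)L$ are) and each coefficient of the second is divisible by $e$ (since $e\mid C_1 b-C_0$). Hence every coefficient of $L'$ is divisible by $e$, giving $e\mid\mathrm{Cont}(L')$ as required.

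The only genuinely delicate step is the bookkeeping in the third paragraph that converts the $\lcs$-level divisibility to the coefficient-level statement $e\mid\mathrm{Cont}(L')$. Once framed that way, the whole argument hinges on the simple identity $C_1\tau-C_0=C_1(\tau-b)+(C_1 b-C_0)$ together with Lemma~\ref{lemma:desing_poly}, and no further work is needed.
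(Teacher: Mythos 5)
Your proof is correct and follows essentially the same route as the paper's: both invoke Lemma~\ref{lemma:desing_poly} to obtain $b$ with $\mathrm{Cont}((\tau-b)L)=\tau^{n+1}(\mathfrak{rp}_1(L))$, then use the decomposition $L'=C_1(\tau-b)L+(bC_1-C_0)L$ together with the resulting coefficient-wise divisibilities to show $\tau^{n+1}(\mathfrak{rp}_1(L))\mid\mathrm{Cont}(L')$ and read off the conclusion. The only difference is cosmetic: you spell out the $C_1=0$ case and the ``$\in\mathcal{I}_1$'' step that the paper calls immediate.
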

\begin{proof}
Assume $C_1\neq 0$ since otherwise it is trivial.

The relation $\LCs_1(L)\mid \LCs_0(L')$ immediately follows from the definition of $\lcs_1$.																							
By Lemma~\ref{lemma:desing_poly}, there exists $b\in F[x]$ such that $\mathrm{Cont}((\tau-b)L)=\tau^{n+1}(\mathfrak{rp}_1(L))$.
Since
$$(\tau-b)L=\sum_{i=0}^{n+1} (\tau(a_{j-1})-ba_{j})\tau^j,$$
we have
\begin{equation}\label{eq:coeff_appsing}
   \gcd(\tau(a_{j-1})-ba_{j}\mid j=0,1,\ldots,n+1)=\tau^{n+1}(\mathfrak{rp}_1(L)).
\end{equation}
Notice that
$$L'=C_1(\tau-b)L+(bC_1-C_0)L,$$
and in particular
$$bC_1-C_0=\sum_{i=0}^{n+1}bc_ia_i-\sum_{i=0}^{n+1}c_i\tau(a_{i-1})=\sum_{i=0}^{n+1}c_i(ba_i-\tau(a_{i-1}))$$
is a multiple of $\tau^{n+1}(\mathfrak{rp}_1)$ due to \autoref{eq:coeff_appsing}. Hence
$\frac{1}{\tau^{n+1}(\mathfrak{rp}_1)}L'\in F[x][\tau].$
When $C_1\neq 0$, %$\lc(L')=C_1\tau(\lc(L))$. 
$$\lcs(\frac{1}{\tau^{n+1}(\mathfrak{rp}_1(L))}L')=\frac{1}{\mathfrak{rp}_1(L)}\tau^{-n-1}(C_1)\lcs(L)=\tau^{-n-1}(C_1)\lcs_1(L).$$
Then we have
$$\LCs_0(L')=\lcs(\mathrm{Prim}(L'))\mid \lcs(\frac{1}{\tau^{n+1}(\mathfrak{rp}_1(L))}L')=\tau^{-n-1}(C_1)\lcs_1(L).$$
%$$L'=a_i(\tau-b)L+(ba_i-\tau(a_{i-1}))L$$
%and $\frac{a_i(\tau-b)L}{\tau(\frac{\lc(L)}{\LC_1(L)})}\in F[x][\tau],\frac{ba_i-\tau(a_{i-1})}{\tau(\frac{\lc(L)}{\LC_1(L)})}\in F[x]$. Then
%$\frac{L_i}{\tau(\frac{\lc(L)}{\LC_1(L)})}\in F[x][\tau]$. As a result, when $a_i\neq 0$,
%$$\LC_0(L_i)\mid \frac{\lc(L_i)}{\tau(\frac{\lc(L)}{\LC_1(L)})}=a_i\tau(\LC_1(L)).$$
%On the other hand, since $L_i=(a_i\tau-\tau(a_{i-1}))L$, when $a_i\neq 0$, by the definition of $\LC_1(L)$, we have
%$$\tau(\LC_1(L))\mid \LC_0(L_i).$$
\end{proof}
\begin{proof}[Proof of Proposition~\ref{prop:LC1}]
In \autoref{thm:LC1}, setting $c_i=1$ for some $i$ and $c_j=0$
for any $j\neq i$ yields
$$\lcs_1(L)\mid \lcs_0(L_i)\mid \tau^{-n-1}(a_i)\lcs_0(L).$$
The desired result follows immediately.						
\end{proof}

\subsection{Desingularizing both leading and trailing coefficients}\label{sec:LC1speedup}

The variation in this section handles both leading and trailing singularities.
It uses only one $L_i$ (defined in \autoref{eq:Li}) without checking the $\gcd$ condition (\autoref{eq:gcdcondition}), since
most apparant singularities are already detected with one $L_i$.

{%\color{blue}
In the algorithm, $\mathfrak{tc}_1$ denotes the \emph{essential part of trailing coefficient at order 1}, which is the counterpart of $\LCs_1$ for trailing coefficients.
}

\begin{algorithm}
\SetKwInOut{Input}{Input}\SetKwInOut{Output}{Output}
\caption{$\lcs$1\_$\mathfrak{tc}$1}\label{alg:speedup}
\Input{a primitive operator $L=\sum_{i=0}^n a_i \tau^i\in F[x][\tau]$ with $a_0a_n\neq 0$}
\Output{$l,t\in F[x]$ such that $\LCs_1(L)\mid l \mid \lcs(L)$ and $\mathfrak{tc}_1(L)\mid l \mid \mathfrak{tc}(L)$}
\BlankLine

$i\leftarrow \lfloor \frac{n}{2}\rfloor$

% \While{$a_i=0$ or $a_{i-1}= 0$}{
% $i\leftarrow i+1$
% }
$L_i\leftarrow (a_i\tau-\tau(a_{i-1}))L$

$l,t\leftarrow \LCs_0(L_i),\TC_0(L_i)$

$l,t\leftarrow \gcd(\tau^{-n}(a_n),l),\gcd(a_0,t)$

\Return{$l,t$}
\end{algorithm}

\subsection{Examples and comparisons}\label{sec:desingtiming}

We have implemented \autoref{alg:LC1} and \autoref{alg:speedup} in Maple and SageMath, and done some experiments to compare the running time of our algorithm with the order-1 LCLM method. All can be found at \cite{implementations}. Below we give an experiment we did in Maple. 
\begin{example}
In this example the base field is $\mathbb{Q}$. We took random operators
$$L_1=(26x^4+20)\tau^{11}-96x^3\tau^9+64x^5\tau^8+45x^{11}\tau^4-x^2\tau^3,$$
$$L_2=-55x^3\tau^7+85x^3\tau^4+64x^4\tau^3+(-14x^8-20x^4)\tau+79x,$$
and then computed
$$L=\mathrm{Prim}(\mathrm{LCLM}(L_1,L_2)).$$
The $x$-degree of $L$ is $109$.
We desingularize $L$ using three different algorithms. For the LCLM method we used the Monte-Carlo version and randomly choose $c=7$.
The results are shown in the \autoref{table:desing}, where each time is the average of ten runs. 

\begin{table}
    \centering
\begin{tabular}{ |c|c|c|  }
 \hline
 algorithms & running time & $x$-degree in output  \\ 
 \hline
 Order-1 LCLM &   1.191s  & 6    \\
 \hline
 $\LC1$ & 0.055s & 6 \\
 \hline
 $\lcs$1\_$\mathfrak{tc}$1 & 0.092s & 6 \\
 \hline
\end{tabular}
\caption{Comparison of different desingularization algorithms}
\label{table:desing}
\end{table}
One might expect $\lcs$1\_$\mathfrak{tc}$1 to be slower than $\LC1$ because it treats both the leading and trailing coefficient, however, we expected
it to be faster because it corresponds to taking just one $L_i$ in $\LC1$. % It remains to be seen why this was not the case.
% Perhaps an implementation issue, or perhaps $\LC1$ also uses few $L_i$'s  --->  GOOD POINT, but why didn't we check it?  So lets not write it for now.
%

% TEMPORARILY DELETED BECAUSE THE CHANGES COULDN'T BE COMBINED.
%
% Finally we note that the Monte-Carlo version of LCLM and $\lcs$1\_$\mathfrak{tc}$1 are not guaranteed to always find $\LCs_1(L)$; they may return a multiple of $\lcs_1(L)$.
% In this example, the last column of \autoref{table:desing} indicates they did find $\lcs_1(L)$.

\end{example}

\section{Future Work}
\subsection{Application to Pag\`{e}s' algorithm}

Pag\`{e}s' Algorithm computes $\chi_p(L)$ for $L\in\mathbb{Z}[x][\tau]$ with $\lc(L) \in \mathbb{Z}$, but with minor adjustments it applies to all recurrence operators in $\mathbb{Z}[x][\tau]$.
We expect desingularization to be beneficial here as well.
% , since it is based on the BCS algorithm.

\subsection{Differential case}
The desingularization improvement should also work for the differential case or Ore operators. For a differential operator $L=\sum_{i=0}^n a_i\partial^i\in F[x][\partial]$,
we can write
$\mathrm{LCLM}(L,\tau-c)=\sum_{i=0}^n c^i L_i$, where
$$L_i=(a_i\partial -(a_{i-1}+a_i'))\,L.$$
% We conjecture that in order to compute $\lcs_1(L)$, it suffices to compute some $L_i$'s instead of all.
%

We expect that there should also be a differential analog of our main result, \autoref{lemma:denom_pcurv}.

 \bibliographystyle{plain}
 \bibliography{myrefs}

\end{document}